\newtheorem{theorem}{Theorem}[section]
\newtheorem{lemma}[theorem]{Lemma}
\newtheorem{cor}[theorem]{Corollary}
\newenvironment{arrayl}%
        {
        \begin{array}{@{}r@{\hskip\arraycolsep}c@{\hskip\arraycolsep}l}}%
        {\end{array}}
\newcommand{\sign}{\mathop{{\rm sign}}}
\newcommand{\uno}{u_l}
\newcommand{\utw}{u_r}
\newcommand{\pno}{p_l}
\newcommand{\ptw}{{p_r}}
\newcommand{\Pno}{\mathop{\mathscr{B}_l}}
\newcommand{\Ptw}{\mathop{\mathscr{B}_r}}
\newcommand{\bU}{\mathbf{U}}
\newcommand{\bW}{\mathbf{W}}
\newcommand{\uf}{u_{\rm f}}
\title{\textbf{A stability criterion for the non-linear wave equation with spatial inhomogeneity%\thanks{blah}
}}
\author{
Christopher J.K.\ Knight\thanks{%Department 
Dept.\ of Mathematics,
University of Surrey, Guildford, Surrey, GU2 7XH
({\tt c.j.k\_knight@hotmail.co.uk})},
\and
Gianne Derks\thanks{%Department 
Dept.\ of Mathematics,
University of Surrey, Guildford, Surrey, GU2 7XH
({\tt g.derks@surrey.ac.uk})}}
\begin{document}

\maketitle

\begin{abstract}
\noindent  
In this paper the non-linear wave equation with a spatial
inhomogeneity is considered.  The inhomogeneity splits the unbounded
spatial domain into three or more intervals, on each of which the
non-linear wave equation is homogeneous.  In such setting, there often
exist multiple stationary fronts.  In this paper we present a
necessary and sufficient stability criterion in terms of the length of
the middle interval(s) and the energy associated with the front in
these interval(s).  To prove this criterion, it is shown that critical
points of the length function and zeros of the linearisation have the
same order. Furthermore, the Evans function is used to identify the
stable branch.  The criterion is illustrated with an example which
shows the existence of bi-stability: two stable fronts, one of which
is non-monotonic. The Evans function also give a sufficient
instability criterion in terms of the derivative of the length function.
\end{abstract}

\section{Introduction}
The non-linear wave equation (sometimes called the non-linear Klein-Gordon equation)
\[
u_{tt}=u_{xx}+V'(u),\quad t>0, \quad x\in\mathbb{R},
\] 
models various systems.  For instance, taking $V(u)=D(1-\cos u)$,
gives the sine-Gordon equation,
\[
u_{tt}=u_{xx}-D\,\sin{u},
\] 
which describes various physical and biological systems, including
molecular systems, dislocation of crystals and DNA
processes~\cite{BS78,D85,GJM79,W74,Y98}.  As an illustrative
example: the sine-Gordon equation is a fundamental model for long
Josephson junctions, two superconductors sandwiching a thin
insulator~\cite{PB97,S69}. In the case of Josephson junctions, the
coefficient $D$ represents the Josephson tunnelling critical current. In an
ideal uniform Josephson junction, this is a constant. But if there are
magnetic variations, e.g. because of non-uniform thickness of the
width of the insulator or if the insulator is comprised of materials
with different magnetic properties next to each other, then the
Josephson tunnelling critical current~$D$ will vary with the spatial
variable~$x$, leading to an inhomogeneous potential
$V(u,x)=D(x)(1-\cos u)$.  If there is a defect in the form of a
scratch or local thickening in the insulator then we model the
inhomogeneity by a step function $D(x)$, with $D=1$ outside the defect
and $D\neq 1$ inside, %three intervals,
see~\cite{KDDS11} and references therein.

To make it easier to generalise this example to more general wave
equations, we write the model for a Josephson junction with one
inhomogeneity as an equation on the disjoint open intervals $I_l$,
$I_m$ and $I_r$ ($\mathbb{R}=\overline{\cup I_i}$):
\begin{equation}\label{WaveEq}
  u_{tt}=u_{xx}+\frac{\partial}{\partial u}V(u,x;I_l,I_m,I_r)-\alpha u_t.  
\end{equation}
%In \eqref{WaveEq}
Here $\alpha\geq0$ is a constant damping coefficient and the potential
$V(u,x;I_l,I_m,I_r)$ consists of three smooth ($C^3$) functions
$V_i(u)$, defined on three disjoint open intervals~$I_i$ of the real
spatial axis, such that $\mathbb{R}=\overline{\cup I_i}$.
%That is $I_l$ is the left
%interval, $I_r$ is the right interval and $I_m$
%is the middle intervals.  
Without loss of generality, we can write $I_l=(-\infty, -L)$,
$I_m=(-L,L)$ and $I_r=(L,\infty)$ for some $L$, by translating the $x$
variable so that the origin occurs at the centre of the middle
interval.  So, the length of the inhomogeneity or `defect' is $2L$.
In this paper we study this equation in its generality, as well as the
motivating example of Josephson junctions.  

The existence and stability of stationary fronts or solitary waves
(from now on we shall simply refer to both of these as stationary
fronts) of \eqref{WaveEq} is studied in~\cite{DDKS09,KDDS11}.  In
particular, spectral stability of the stationary fronts (which in this
case implies non-linear stability, see~\cite{KDDS11} and references
therein) is studied and a necessary and sufficient criterion is
developed for the spectral operator to have an eigenvalue zero.  The
spectral operator can be related to a self-adjoint linearisation
operator such that $\lambda$ is an eigenvalue of the spectral operator
if and only if $\Lambda=\lambda(\lambda+\alpha)$ is an eigenvalue of
the self-adjoint operator. The self-adjoint operator has real
eigenvalues and a continuous spectrum on the negative half line
bounded away from zero.  So as the largest eigenvalue passes through
zero a change of stability occurs.  The self-adjoint operator is a
Sturm-Liouville operator, which implies that the discrete eigenvalues
are simple and bounded above.  It also means that the eigenfunction
associated with the largest discrete eigenvalue will have no zeroes,
providing a tool to identify the largest eigenvalue.  In~\cite{KDDS11}
a necessary and sufficient condition for the existence of an
eigenvalue zero is derived, but this is only a necessary condition for
a change of stability.  The proof %of the Theorem
in~\cite{KDDS11} contains the construction of an eigenfunction and
hence the absence of zeroes can be checked to verify that the
eigenvalue zero is the largest eigenvalue. But to guarantee a change
of stability, one also needs verification that the largest eigenvalue
crosses through zero.  In this paper we will consider this crossing
question and derive a necessary and sufficient condition for the
change of stability of a stationary front.

\medskip 
As this paper builds on~\cite{KDDS11}, we start with a brief
introduction to the relevant notation and results from this paper.
Stationary fronts of~\eqref{WaveEq} are solutions of the Hamiltonian ODE
\[
0=u_{xx}+\frac{\partial}{\partial u}V(u,x;I_l,I_m,I_r),
\] 
which satisfy $u_x(x)\to0$ exponentially fast for $x\to\pm\infty$.  We
introduce the notation $p=u_x$, then the Hamiltonian $H(u,p) =
\frac{1}{2}p^2+V(u,x;I_l,I_m,I_r)$ is constant on each of the
intervals $I_i$, $i=l,m,r$. In the middle interval, we define the
Hamiltonian parameter
%which enables us to re-write~\eqref{WaveEq} for stationary fronts.
%For instance, in the middle interval 
$g=\frac{1}{2}p^2+V_{m}(u)$ for $x\in I_m$. Writing $\uno$ and $\utw$
for the values of $u$ at the boundaries between the different
intervals ($x=-L$ and $x=L$) and $\pno$ and $\ptw$ for the values
of $p$ at the same values of $x$, the matching conditions at the
boundaries give that $u_i$ and $p_i$ can be parametrised by $g$:
\begin{equation}\label{eqrel2}
  \begin{array}{rcccl}
\frac{1}{2}\pno^2 &= &g-V_{m}(\uno) &=&V_--V_l(\uno);\\
\frac{1}{2}\ptw^2 &=& g-V_{m}(\utw)&=& V_+-V_r(\utw).    
  \end{array}
\end{equation} 
Here $V_-$ and $V_+$ are the asymptotic values of $V_l(u(x))$ and
$V_r(u(x))$ respectively, that is $V_-:=\lim_{x\to -\infty}V_l(u(x))$
and $V_+:=\lim_{x\to\infty}V_r(u(x))$.  These limits are well defined
as $u(x)$ is a front. The relations in~\eqref{eqrel2} might lead to
multi-valued functions $\uno$, etc. However, usually they are locally
well-defined, except potentially at some isolated bifurcation
points. To find those points, we define the following bifurcation functions
\begin{eqnarray*}\label{bifeq}
\Pno(g)&=&\pno(g)[V_{m}'(\uno(g))-V_l'(\uno(g))], \\
\Ptw(g)&=&\ptw(g)[V_r'(\utw(g))-V_{m}'(\utw(g))].
\end{eqnarray*}
A bifurcation occurs if $g$ satisfies $\Pno(g)=0$ or $\Ptw(g)=0$.  We
define the bifurcation values $g_{\rm bif}$ to be any value of $g$ (if
it exists) where $\Pno(g)=0$ or $\Ptw(g)=0$. These points are usually
associated with the edge of the existence interval.

Finally, the fundamental theorem of calculus enables the length of the middle
interval to be parametrised by the Hamiltonian parameter $g$.  For
instance, if $u_x(x;g)$ has no zeroes in the middle interval then
\[
2L(g)=\int_{\uno(g)}^{\utw(g)}\frac{du}{p(u,g)},
\] 
where $\uno(g)$ and $\utw(g)$ are the values of $u$ where the front
$u(x;g)$ enters resp.~leaves the middle interval.  If $u_x(x;g)$ has
zeroes in the middle interval than the expression for $L(g)$ is more
complicated but it can still be expressed explicitly, see~\cite{KDDS11}.

Now we can formulate the necessary and sufficient condition for the
existence of an eigenvalue zero from~\cite{KDDS11}.
\begin{theorem}[{\cite[Theorem 4.5]{KDDS11}}]\label{OldTheo}
 Let the front $\uf(x;g)$ be a solution of \eqref{WaveEq},
  such that all zeroes of $\partial_x\uf(x;g)$ are simple and the length of the
  middle interval of $\uf(x;g)$ is part of a smooth length curve $L(g)$.
  The linearisation operator $\mathcal{L}(g):=D_{xx}+\frac{\partial^2}{\partial
  u^2}V(\uf(x;g),x;g),~x\in \cup I_i$, associated with $\uf(x;g)$
  has an eigenvalue zero (with eigenfunction in $H^2(\mathbb{R})$) if and only if
\begin{equation}\label{conone}
\Pno(g)\Ptw(g)L'(g)=0.
\end{equation}
If $g\neq g_{\rm bif}$, then there is an eigenvalue zero (with
eigenfunction in $H^2(\mathbb{R})$) if and only if $L'(g)=0$.
\end{theorem}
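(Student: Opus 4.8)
\section*{Proof proposal}

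The plan is to build the candidate eigenfunction explicitly from two natural solutions of the linearised equation and to reduce the gluing conditions across the interfaces to a single $2\times2$ determinant. On each interval $I_i$ the potential depends on $u$ alone, so differentiating the stationary equation $0=\partial_{xx}\uf+V_i'(\uf)$ with respect to $x$ shows that $\partial_x\uf$ solves $\mathcal L(g)\psi=0$ on every $I_i$, and differentiating with respect to the parameter $g$ shows the same for $\partial_g\uf$. Since $\uf$ is a front, $\partial_x\uf$ decays exponentially as $x\to\pm\infty$ and spans the one-dimensional space of decaying solutions on each outer interval (the asymptotic operator is hyperbolic, with one decaying and one growing mode). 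Hence any $H^2(\mathbb R)$ solution must equal $a_l\,\partial_x\uf$ on $I_l$ and $a_r\,\partial_x\uf$ on $I_r$, while on $I_m$ it is a combination $b\,\partial_x\uf+c\,\partial_g\uf$; a short Wronskian computation gives $W(\partial_x\uf,\partial_g\uf)\equiv1$, so these two are independent and form a fundamental system on $I_m$ whenever $g\neq g_{\rm bif}$.

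Next I would evaluate the Cauchy data of these solutions at the interfaces $x=\pm L$. The one delicate point is that the interface positions $\pm L(g)$ themselves depend on $g$, so the total $g$-derivatives of the boundary data satisfy, by the chain rule, $\uno'(g)=-\pno\,L'(g)+\partial_g\uf(-L)$ and $\pno'(g)=V_{m}'(\uno)\,L'(g)+\partial_g\partial_x\uf(-L)$, with analogous relations at $x=L$. This is exactly where $L'(g)$ enters the calculation. Differentiating the matching relations~\eqref{eqrel2} with respect to $g$ yields $\uno'(g)=\pno/\Pno(g)$, $\pno'(g)=-V_l'(\uno)/\Pno(g)$, $\utw'(g)=-\ptw/\Ptw(g)$ and $\ptw'(g)=V_r'(\utw)/\Ptw(g)$; these express the Cauchy data of $\partial_g\uf$ at the interfaces purely through $\Pno(g)$, $\Ptw(g)$ and $L'(g)$.

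With these expressions in hand, the requirement that the middle solution $b\,\partial_x\uf+c\,\partial_g\uf$ match a decaying outer solution at each interface---that is, that its Cauchy data be proportional to $(\pno,-V_l'(\uno))$ at $-L$ and to $(\ptw,-V_r'(\utw))$ at $L$---becomes the pair of linear equations $\Pno(g)\,\bigl(b+c\,L'(g)\bigr)=0$ and $\Ptw(g)\,\bigl(-b+c\,L'(g)\bigr)=0$, the cross terms cancelling precisely because of the differentiated relations above. A nontrivial $(b,c)$, and hence an eigenfunction in $H^2(\mathbb R)$, exists if and only if the determinant of this system vanishes, and that determinant equals $2\,\Pno(g)\,\Ptw(g)\,L'(g)$. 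This proves~\eqref{conone}. For the second statement, $g\neq g_{\rm bif}$ means $\Pno(g)\Ptw(g)\neq0$, so the condition collapses to $L'(g)=0$.

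The main obstacle is the bifurcation values $g=g_{\rm bif}$, where $\Pno(g)=0$ or $\Ptw(g)=0$: there the relations for $\uno'(g),\ptw'(g)$ blow up, the Wronskian normalisation degenerates, and the fundamental system $\{\partial_x\uf,\partial_g\uf\}$ can collapse, so the equivalence in~\eqref{conone} must be established separately for these points---for instance by observing that $\Pno(g)=0$ forces either the slope $\pno$ or the jump $V_{m}'(\uno)-V_l'(\uno)$ to vanish, so that $\partial_x\uf$ acquires a continuous derivative at the interface and can itself be completed to a decaying eigenfunction, or else by a limiting argument as $g\to g_{\rm bif}$. A secondary technical point, already permitted by the hypotheses, is that $\partial_x\uf$ may have simple interior zeros in $I_m$; this does not affect the argument, since $W\equiv1$ prevents $\partial_x\uf$ and $\partial_g\uf$ from vanishing simultaneously and only the Cauchy data at $x=\pm L$ enter the matching.
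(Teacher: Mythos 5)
Your reconstruction is correct away from the bifurcation values and arrives at exactly the right $2\times2$ matching determinant, but it organises the computation differently from the proof in~\cite{KDDS11} that the paper imports (and whose ingredients resurface in the Evans-function calculation of Section~4). There, the middle-interval fundamental system is the reduction-of-order pair $(\uf)_x$ and $(\uf)_x\int_{-L}^x d\xi/(\uf)_x^2$; the matching across $x=\pm L$ produces the quantity $\Pno+\Ptw+\Pno\Ptw\,I$ with $I=\int_{-L}^{L}d\xi/(\uf)_x^2$, and only afterwards is this converted into $-2\,\Pno\Ptw\,L'(g)$ via the separately derived identity \cite[eq.\ (32)]{KDDS11}, obtained by differentiating the explicit length integral; interior zeros of $(\uf)_x$ force case splits there because $\int d\xi/(\uf)_x^2$ diverges at such zeros. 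Your basis $\{\partial_x\uf,\partial_g\uf\}$ is an equivalent choice ($W(\partial_x\uf,\partial_g\uf)\equiv1$ follows, as you say, from differentiating $\tfrac12 u_x^2+V_m(u)=g$ in $g$, so $\partial_g\uf$ differs from $(\uf)_x\int_{-L}^x d\xi/(\uf)_x^2$ by a multiple of $(\uf)_x$), but it buys two things: the moving-boundary chain rule together with the differentiated relations~\eqref{eqrel2} bakes $L'(g)$ into the interface data from the start, so the analogue of \cite[eq.\ (32)]{KDDS11} is proved en passant rather than invoked; and $\partial_g\uf$ stays smooth across interior zeros of $(\uf)_x$, so the non-monotone case needs no separate treatment. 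Your computation also exhibits the eigenfunction at $L'(\widehat g)=0$ as $\partial_g\uf$ glued to multiples of $\partial_x\uf$ (your system forces $b=0$), consistent with the explicit construction referenced after Theorem~\ref{New1}.

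The one genuine weak spot is your treatment of $g=g_{\rm bif}$, and the specific fix you sketch does not work. First, if $\Pno(g)=0$ because $\pno=0$, then $(\uf)_x$ is \emph{not} $C^1$ at $x=-L$ (the jump $V_m'(\uno)-V_l'(\uno)$ need not vanish), so the premise of your completion argument fails in that sub-case. Second, even when the jump vanishes, $\partial_x\uf$ cannot in general "be completed to a decaying eigenfunction": redoing your determinants without the singular relations $\uno'=\pno/\Pno$ etc., using only the identity $p\,\partial_g p+V_m'(u)\,\partial_g u=1$ at $x=\mp L$, gives $D_-=-b\,\Pno+c\bigl(1-(V_m'(\uno)-V_l'(\uno))\,\partial_g\uf(-L)\bigr)$ and $D_+=b\,\Ptw+c\bigl(1+(V_r'(\utw)-V_m'(\utw))\,\partial_g\uf(L)\bigr)$; at $\Pno=0$ this forces $c=0$ and then a nontrivial solution requires $\Ptw=0$ as well. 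So $\Pno=0$ alone does \emph{not} yield an eigenvalue zero; the invariant quantity at such points is the continuous extension $\Pno+\Ptw+\Pno\Ptw\,I=-2\,\Pno\Ptw\,L'$, and the naive reading "$\Pno=0$ makes the product in~\eqref{conone} vanish" fails because $L'(g)$ generically blows up as $g\to g_{\rm bif}$ (these points sit at the edge of the existence interval, which is why the smooth-length-curve hypothesis matters). Since this paper only ever uses the theorem in its $g\neq g_{\rm bif}$ form, your argument covers everything actually needed here, but the first sentence of the statement at $g=g_{\rm bif}$ requires the careful limiting analysis of~\cite{KDDS11}, not your sketch.
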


As the linear operator $\mathcal{L}(g)$ is a Sturm-Liouville operator,
the eigenvalues are simple and bounded above, meaning that away from
the bifurcation points, the curves of eigenvalues,
$\Lambda(g)$ are well-defined and continuous.  
% It also means that the eigenfunction
% associated with the largest eigenvalue will have no zeroes.  
Thus if there exists an eigenvalue zero for $g=g_0\neq g_{\rm bif}$,
and the associated eigenfunction $\Psi(x)$ has no zeroes (this
eigenfunction is constructed in the proof of Theorem~\ref{OldTheo},
see \cite{KDDS11}), then a change of stability (both spectral and
non-linear) \emph{may} occur as $g$ passes from one side of $g_0$ to
the other.
%Theorem~\ref{OldTheo} gives that $g_0$
%satisfies $L'(g_0)=0$.   
Whether or not the change of stability will \emph{actually} happen,
depends on the degree of the zero of the largest eigenvalue
$\Lambda(g)$ at $g=g_0$.
%Namely the presence of an eigenvalue zero need not affect the
%stability of a stationary front, even if it is the largest eigenvalue.
Specifically, the largest eigenvalue may be strictly negative, touch
zero for some value of $g$ and then become strictly negative again,
i.e. the eigenvalue $\Lambda(g)$ may have a non-simple zero and the
front is stable for all $g$, see the left panel of
Figure~\ref{PartIITurningEValue}.
\begin{figure*}[htb]
\centering\hspace*{-0.3in}
\includegraphics[width=.6\textwidth]{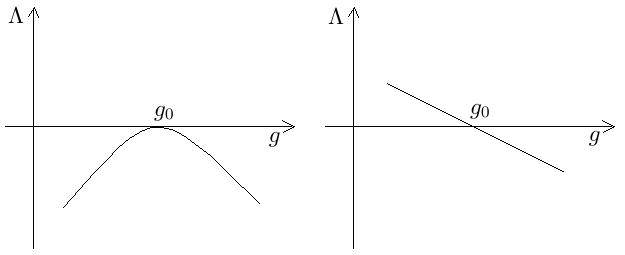}
\caption{Two possible local behaviours for the eigenvalue $\Lambda(g)$
  when $\Lambda(g_0)=0$.  In the left panel the eigenvalue
  $\Lambda(g_0)$ has a second order zero.}\label{PartIITurningEValue}
\end{figure*} 
% 
% In the remainder of this paper we extend Theorem~\-ref{OldTheo} (for
% $g\neq g_{\rm bif}$),  to a necessary and sufficient
% condition for a \emph{change of stability} of the stationary front, in terms
% of the Hamiltonian parameter $g$. 
In the examples in~\cite{KDDS11}, no eigenvalues of this type were
encountered; in all cases, the presence of an eigenvalue zero 
also led to a change of the sign in this eigenvalue.
%caused one of the eigenvalues to change sign.  
Thus the eigenvalue, as a function of $g$, locally looked like the
right panel of Figure~\ref{PartIITurningEValue}.  This suggested that
this would always be the case for the inhomogeneous non-linear wave
equation~\eqref{WaveEq}.  However this suggestion is false.

In the next section we shall present an example where the length
function $L(g)$ has an inflection point. 
%showing an inflection point of the length function $L(g)$. 
% corresponding to an eigenvalue zero which is a turning point,
% i.e.~it locally it looks like the left panel of
% Figure~\ref{PartIITurningEValue}.
From~Theorem~\ref{OldTheo} it follows that an inflection point of the
length function $L(g)$ must correspond to an eigenvalue zero.  We will
show that in this example the inflection point of $L(g)$ corresponds
to a non-simple root of the eigenvalue $\Lambda(g)$ and that a change
of stability does not occur as $g$ moves from one side of $g_0$ to the
other, see Figure~\ref{infleclength}.  This result is generalised in
section~\ref{Sec.SC}, leading to a necessary and sufficient condition
for a change of stability.  In section~\ref{sec.evans}, the Evans
function is used to show that if $\Pno(g)\Ptw(g)L'(g)$ times the
product of the signs of the derivative of the front near its end
points is positive, then the front is unstable.  Combined with the
results of section~\ref{Sec.SC}, this implies that the stable branch
is the branch for which the product is negative.
To complete our exposition, in section~\ref{sec.multiple}, we
consider the wave equation with multiple middle intervals. It will be
shown that the criterion for a wave equation with one middle interval
can be used to derive a sufficient and necessary condition for a
change of stability in such systems.
\begin{figure*}[h]
\centering
\includegraphics[height=1.7in]{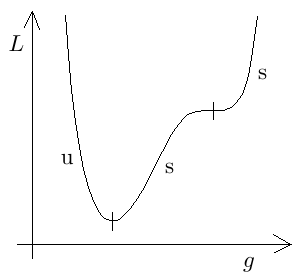}
\caption{A length curve with an inflection point and an eigenvalue with a non-simple root (see left panel of Figure~\ref{PartIITurningEValue}).  
The symbol `s' denotes a stable branch 
whilst the symbol `u' denotes an unstable branch.}\label{infleclength}
\end{figure*}

\section{Motivating Example}\label{sec.ex}
In this section we present an example in which the potential in the
inhomogeneous wave equation~\eqref{WaveEq} is such that the length
curve $L(g)$ has an inflection point at which no change of stability
occurs.  This example involves a one parameter family of potentials,
leading to a two parameter curve of length curves (the Hamiltonian
parameter~$g$ and the parameter for the potentials).  We will use the
full family to make inferences about the stability of various branches
of $L(g)$ and hence to show that the inflection point in the length
curve
%(in the example with an inflection point, $c=\widehat{c}$) 
does not correspond to a change of stability.  Finally we illustrate
the inferences that we made by confirming the prediction that there is
a region in the parameter space related to the potential, for which
there is
%are examples in our family (a region in the parameter $c$) of 
bi-stable behaviour.

The example we consider has the sine-Gordon potential (with induced
current and dissipation) in the left and right intervals ($|x|>L$) and
has a potential of the form $V_m(u)=\frac{k}{2}(u-2\pi-c)^2$ in the
middle interval ($|x|<L$), where $k$ and $c$ are parameters.  That is,
we consider stationary fronts of the following equation
\begin{equation}\label{newlabel}
 u_{tt}=u_{xx}+\frac{\partial V(u,x;L)}{\partial u}-\alpha u_t \mbox{~~where~~}
V(u,x;L):=\left\{\begin{array}{cc}
\cos(u)+\gamma u, & |x|>L;\\                                                                               
\frac{k}{2}(u-2\pi-c)^2, & |x|<L. \\
\end{array}\right.
\end{equation}
The stationary fronts join the steady states $\arcsin(\gamma)$ as
$x\to-\infty$ and $2\pi+\arcsin(\gamma)$ as $x\to\infty$.  In
Figure~\ref{clength}, typical length curves $L(g)$ are plotted for
various values of $c$ between $0$ and $-2.8$, while $\gamma$ and $k$
are kept fixed at $\gamma=0.1$ and $k=1$.  Recall that $g$ represents
the Hamiltonian in the middle interval, i.e.,
$g=\frac{1}{2}(u_x^2+k(u-2\pi-c)^2)$. The value of $\alpha$
is not relevant for the existence of stationary fronts nor for the
length curve. It only comes into play once the (linear) stability of the
stationary fronts is considered. However, for all values of
$\alpha\geq0$, the front is either always stable or always
unstable. For $\alpha=0$, the stable fronts are neutrally stable
(purely imaginary eigenvalues in the linearisation). While for
$\alpha>0$, the same fronts are now attractors.
\begin{figure*}[thb]\hspace*{-0.3in}
\centering
\includegraphics[width=.49\textwidth]{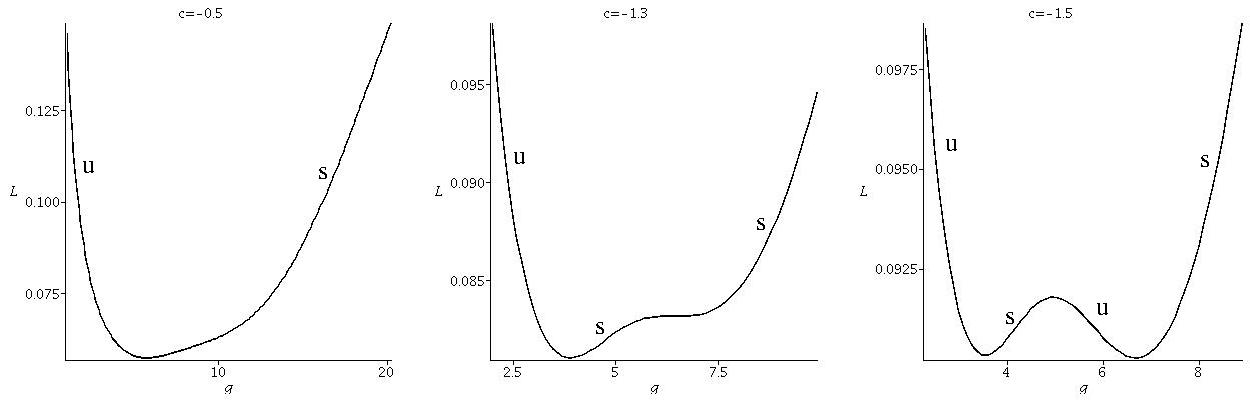}%
\includegraphics[width=.49\textwidth]{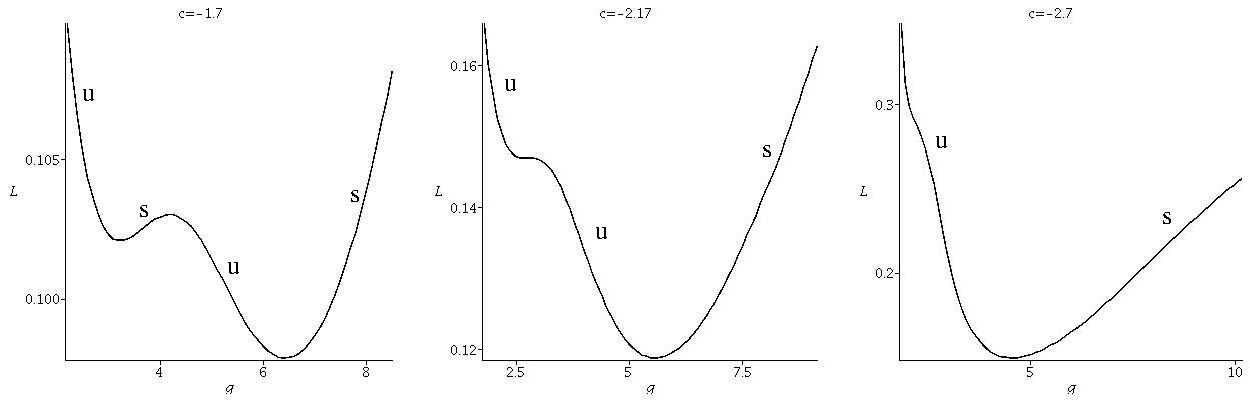}
\caption{Typical changes in the length curve when $k=1$ and
  $\gamma=0.1$ and $c$ is decreased (left to
  right) from $c=-0.5$ to $c=-2.7$.  The symbol `s' denotes a
  stable branch whilst the symbol `u' denotes an unstable
  branch.}\label{clength}
\end{figure*}
Figure~\ref{clength} illustrates that there are qualitative changes in
the length curves when $c$ is varied.  This qualitative change effects
the number of turning points of $L(g)$, changing from one to three and
back to one again as $c$ is varied.
%
% For $k$ and $\gamma$ fixed the
% potential has a parameter, $c$, which can be varied.  We use this
% parameter to embed a length function with an inflection point into a
% continuous family of length functions, which in turn will allow us to
% demonstrate that the stability of the branches depicted in
% Figure~\ref{infleclength} is correct when there is an inflection
% point, $c=\widehat{c}$.
%
We show this sketch for $k=1$ as it provides the full picture of what
happens as $c$ is varied.  For smaller $k$ some of these panels (right
most ones) are not seen due to a restricted $g$ existence interval.

We have included the stability of the various branches of $L(g)$ in
Figure~\ref{clength}.  The stability of the two branches for smallest
and largest values of $c$ are found by numerical simulations with
$\alpha=0.1$ for one solution on each of the branches.
Theorem~\ref{OldTheo} gives that an eigenvalue zero can only occur at
turning or inflection points of $L(g)$, hence on one branch the
stability can not change.  The stability profiles for the other panels
are then inferred from these two panels, using the fact that $V$ is
continuous in~$c$, hence $L$ will be continuous in~$c$.  For example,
to get the stability of the branches for the second image from the
left, note that the inflection point occurs away from the turning point
so the stability of the branches in the locale of the turning point
cannot change (no eigenvalues have crossed zero for $g$ in the
neighbourhood of the turning point).  Furthermore, the stability at
the outer parts of the branches (largest and smallest values of $g$)
can not change as $L'(g)\neq 0$ at the outer parts. Therefore the
only possibility for the stability of the branches is as shown in this
second panel and hence the stability doesn't change at the inflection
point (as $g$ is varied and $c$ is kept fixed).
% Meaning that this is an example the stability profile that
% we desired (see Figure~\ref{infleclength}). 
A similar argument can be used for the second image on the right. Now
the stability in the middle images follows from matching the outer
panels on the left and right. Going from the second to the third image
from the left, continuity does not provide any conclusion about the
stability of the third branch in first instance. Similarly we don't
know the stability of the second branch in the third picture on the
right. However, by comparing these two images and using continuity, we
can conclude the stability of these branches.

To illustrate the inferences about the stability of the various
branches, we observe that Figure~\ref{clength} suggests that there are
values of $c$ with two stable stationary fronts. It is easier to
visualise this bi-stable behaviour for smaller values of $k$, so we
focus on $k=0.6$.  Recall that this change in $k$ has the effect of
removing some of the panels (from the right) in Figure~\ref{clength}.
For $k=0.6$ and $\gamma=0.1$, there is an inflection point on the
right branch when $c=\widehat{c}\approx0.75$.  If $c=-2$ (and $k=0.6$,
$\gamma=0.1$) then the length curve has three turning points, see
Figure~\ref{fig.3},
\begin{figure*}[h]
\centering\hspace*{-0.3in}
\includegraphics[height=2in]{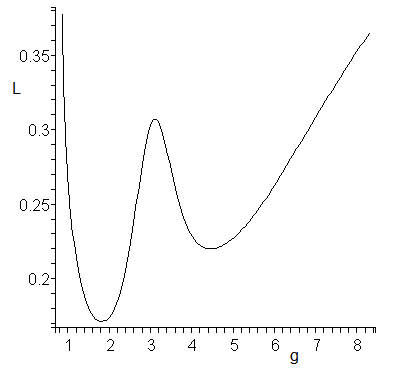}
\caption{The curve $L(g)$ for $\gamma=0.1$, $k=0.6$ and $c=-2$ showing
  that there are four stationary fronts for $L=0.26$.}\label{fig.3}
\end{figure*} 
and for $L=0.26$ there are four stationary fronts, two of which are
stable.
In the left panel of Figure~\ref{fig.4} we show the four stationary
fronts for $L=0.26$.
The red non-monotonic front corresponds to the right most branch in
Figure~\ref{fig.3}.  The fact that two of the fronts are stable is
confirmed by simulating equation~\eqref{newlabel}, starting with different
initial conditions, see the middle and right panels of
Figure~\ref{fig.4}.  The initial conditions used are the (unique)
stable front for a slightly larger length value ($L= 0.33$), which
converges to the stable non-monotonic stationary front (right
plot). And the other initial condition is the stationary front
solution to the sine-Gordon equation, which converges to the other
stable front (middle plot). It is interesting to see that the
non-monotonic stationary front is indeed stable.
\begin{figure*}[h]
\centering
\includegraphics[height=1.7in]{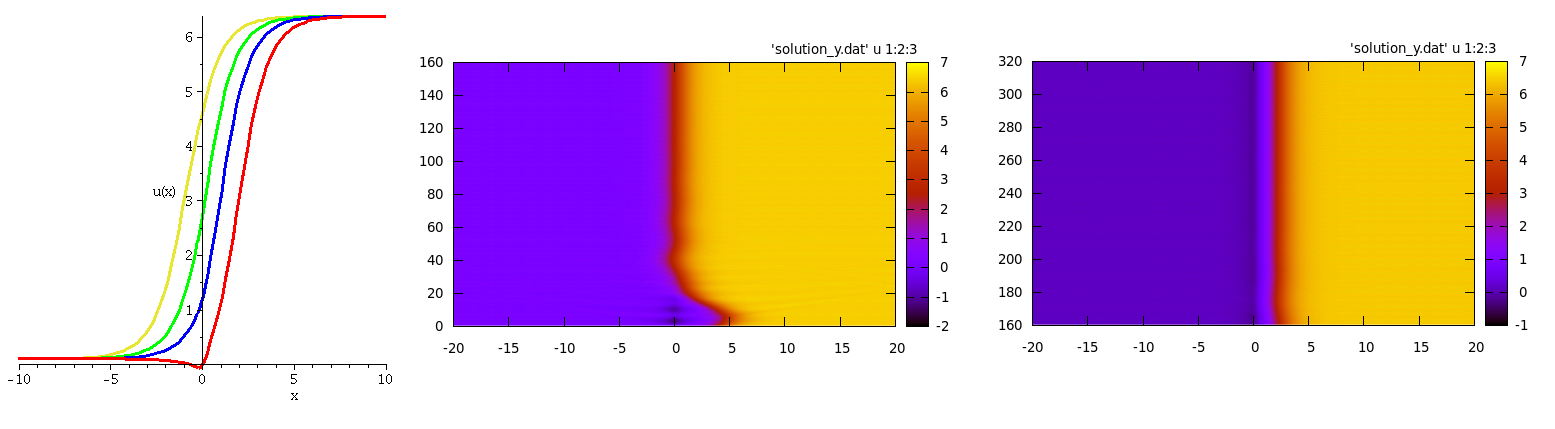}
\caption{Left: the four stationary fronts for $\alpha=0.1$, $k=0.6$,
  $\gamma=0.1$, $c=-2$ and $L=0.26$. Middle and Right: Simulations
  of~\eqref{newlabel} converging to the two stable stationary fronts
  (second (yellow) and fourth (red) front in the left
  plot).}\label{fig.4}
\end{figure*} 

This simulation also confirms the inference that no change of
stability occurs at the inflection point.  
% As the inflection point corresponds to the existence of an
% eigenvalue zero this must mean that the order of the eigenvalue zero
% is even (i.e. left panel of Figure~\ref{PartIITurningEValue}).
In the next section we will show that this is generically true.
%the stability does not change at an inflection point.

% Returning to Figure~\ref{clength} we see the presence of inflection
% points in the $c$ evolution, both of which do not correspond to
% changes of stability.  For instance, if $\gamma=0.1$ and $c=-0.75$
% then we have an inflection point (second from left in
% Figure~\ref{clength}).  As there is an eigenvalue zero at the
% inflection point and the stability of the two branches on either
% side are the same, the eigenvalue zero must correspond to a
% non-simple zero (of even order) of $\Lambda(g)$, e.g. the left panel
% in Figure~\ref{PartIITurningEValue}.

\section{A Necessary and Sufficient Criterion for Stability Change}\label{Sec.SC}
%\section{Order of zeroes of $L'(g)$ and $\Lambda(g)$}

In the example of the previous section we saw an inflection point of
the length function $L(g)$ which corresponded to a non-simple zero of
the eigenvalue function $\Lambda(g)$.  That is: at the inflection point
$g=\widehat{g}$, it holds that
$L'(\widehat{g})=0=\Lambda(\widehat{g})$ and also
$L''(\widehat{g})=0=\Lambda'(\widehat{g})$.  Here we see a hint of the
relationship between the order of a zero of $L'(g)$ and the order of
the corresponding zero of $\Lambda(g)$.  We make this explicit in the
following lemma.
% We will show that a simple root of the eigenvalue $\Lambda(g)$,
% $g_0$, will correspond to a first order stationary point of $L$.
% That is $L'(g_0)=0$ but $L''(g_0)\neq0$.  Whilst a second order root
% of $\Lambda(g)$ will correspond to a second order stationary point
% of $L$.  That is, an inflection point where $L'(g_0)=0=L''(g_0)$.
% The justification for these statements comes from considering what
% would happen if the curves in Figure~\ref{PartIITurningEValue} were
% moved up or down and how $L(g)$ may be continuously deformed to have
% the same number of stationary points as the number of zeroes of the
% perturbed $\Lambda(g)$ for $g$ near $g_0$.  The two statements on
% the order of zeroes are direct corollaries of the following more
% general lemma.
\begin{lemma}\label{HysLem}
  Away from the bifurcation point $g_{\rm bif}$, the order of a zero
  of $\Lambda(g)$ is the same as the order of the associated
  stationary point of $L(g)$.  For instance if $L(g)$ has a stationary
  point at $g=g_0\neq g_{\rm bif}$, with $L'(g_0)=0=L''(g_0)$
  and $L'''(g_0)\neq0$, then $g_0$ is a second order root of
  $\Lambda(g)$ (i.e., $\Lambda(g_0)=0=\Lambda'(g_0)$ and
  $\Lambda''(g_0)\neq0$) and visa verse.
\end{lemma}
\begin{proof}\mbox{}
  \newline Let $L(g)$ have a stationary point at $g=g_0$, then
  Theorem~\ref{OldTheo} implies that the linearisation operator
  $\mathcal{L}$ associated with the front $u(x;g_0)$ has an eigenvalue
  $\Lambda(g_0)=0$.

  As a thought experiment (motivated by the previous section),
  consider smooth perturbations of the potential $V_m(u)$ which have
  the effect of perturbing the associated $L(g)$ curve.  The
  perturbation may lead to a change in the number of
  stationary points of $L(g)$ (in the locality of $g_0$).
  As it is perturbations of the potentials which cause this change in
  $L(g)$, Theorem~\ref{OldTheo} still holds, meaning that the
  stationary points of $L(g)$ are still associated with zeroes of
  $\Lambda(g)$.  Thus the original curve $\Lambda(g)$ will be smoothly
  perturbed in such a way that it has the same number of zeroes as the
  perturbed curve $L(g)$ has stationary points.
  Away from any bifurcation points ($g_{\rm bif}$), a smooth
  perturbation of $V_m(u)$ leads to a smooth perturbation of $L(g)$.
  If $L(g)$ has a turning point of first order  at $g=g_0$, then any
  small smooth perturbation will lead to $L(g)$ still having exactly one
  turning point (in the locality of $g_0$).  Similarly, if $\Lambda(g)$ has a
  simple zero (right panel in Figure~\ref{PartIITurningEValue}) then
  any smooth perturbation to $\Lambda(g)$ will still (locally)
  have one zero.  Whilst if $\Lambda(g)$ has a second order zero (left
  panel in Figure~\ref{PartIITurningEValue}) a small smooth perturbation
  would generically result in either no zeroes of $\Lambda(g)$ or in
  two zeroes of $\Lambda(g)$, which is not consistent with a first
  order turning point of $L(g)$.  Thus we can conclude that a first
  order turning point of $L(g)$ cannot be not associated with a second order
  zero of $\Lambda(g)$, and similarly, it cannot be associated with an
  even order zero of $\Lambda(g)$.

  To make those ideas formal, we embed the middle potential $V_m(u)$
  smoothly in a larger family $V_m(u,\epsilon)$, where $\epsilon$ is a
  small parameter and $V_m(u,0)=V_m(u)$. 
 %This leads to a family of smooth overall potentials
 %$V(u,x;L,\epsilon)$.  
  For $\epsilon$ small and $g$ near $g_0$, this embeds the family of
  fronts $\uf(x;g)$ in the family $\uf(x;g,\epsilon)$ with associated
  length curves $L(g,\epsilon)$ and the eigenvalue
  $\Lambda(g,\epsilon)$.  As shown in~\cite{KDDS11}, the length $L(g)$
  is the sum of integrals of the form
  $\displaystyle\int_I\textstyle\frac{du}{2\sqrt{g-V_m(u)}}$. Hence
  the embedding $V_m(u,\epsilon)$ can be chosen such that
\begin{equation}\label{condV}
\frac{\partial^2 L}{\partial \epsilon\partial
    g}(g_0,0)\neq0 \mbox{ and } 
\left\langle\left. 
\frac{D}{D\epsilon}\left(\frac{\partial^2}{\partial u^2}
  V(\uf(g_0,\epsilon),\epsilon)\right)\right|_{\epsilon=0}
\psi(g_0,0)\,,\,\psi(g_0,0) \right\rangle_{L_2} \neq
0, 
\end{equation}
implying that 
\[
\frac{\partial \Lambda}{\partial  \epsilon}(g_0,0)\neq0.
\] 
To see this inequality, differentiating
the eigenvalue equation $[\mathcal{L}(g_0,\epsilon) -
\Lambda(g,\epsilon)]\psi(g_0,\epsilon)=0$ with respect to $\epsilon$,
evaluating at $\epsilon=0$, and taking the $L_2$ inner product with
the eigenfunction $\psi(g_0,0)$ shows that
\[
\left\langle\left. \left[\frac{D}{D\epsilon}\left(\frac{\partial^2}{\partial
    u^2} V(\uf(g_0,\epsilon),\epsilon)\right)\right|_{\epsilon=0}- 
\frac{\partial \Lambda}{\partial \epsilon}(g_0,0) \right]\,
\psi(g_0,0) \,,\, \psi(g_0,0)\right\rangle_{L_2}=0. 
\]
Hence the second inequality in~\eqref{condV} implies that
$\frac{\partial \Lambda}{\partial \epsilon}(g_0,0)\neq0$.

From Theorem~\ref{OldTheo} we know that $\frac{\partial L}{\partial
  g}(g,\epsilon)=0 \Leftrightarrow \Lambda(g,\epsilon)=0$.  To see
that this implies that the order of the critical point of $L(g)$ and
the order of the zero of $\Lambda(g)$ is the same at $g=g_0$, we start
with showing the existence of a curve of critical points $g_L(\epsilon)$
with $\frac{\partial L}{\partial g}(g_{L}(\epsilon),\epsilon)=0$ and
$g_L(0)=g_0$.

First we consider the non-degenerate case: if $\frac{\partial
  L}{\partial g}(g_0,0)=0$ and $\frac{\partial^2 L}{\partial
  g^2}(g_0,0)\neq 0$ then the Taylor series for $\frac{\partial
  L}{\partial g}(g,\epsilon)$ about $(g_0,0)$ is
\[
\frac{\partial L}{\partial g}(g,\epsilon)=(g-g_0)\frac{\partial^2 L}{\partial g^2}(g_0,0)
+\epsilon \frac{\partial^2 L}{\partial g \partial \epsilon}(g_0,0)
+O\left(|\epsilon+(g-g_0)|^2\right)
\]
Since $\frac{\partial^2 L}{\partial g^2}(g_0,0)\neq0$, the implicit
function theorem gives that there exist a unique curve $g_L(\epsilon)$
for $\epsilon$ near zero such that $\frac{\partial L}{\partial
  g}(g_{L}(\epsilon),\epsilon)=0$ and $g_L(\epsilon)=g_0-\epsilon\,
\left(\frac{\partial^2 L}{\partial g^2}(g_0,0)\right)^{-1}
\frac{\partial^2 L}{\partial g \partial \epsilon}(g_0,0)+
O(\epsilon^2)$. Differentiating $\frac{\partial L}{\partial
  g}(g_{L}(\epsilon),\epsilon)=0$ with respect to $\epsilon$ and
evaluating at $\epsilon=0$ gives
\[
g_L'(0)= -  \left(\frac{\partial^2 L}{\partial
    g^2}(g_0,0)\right)^{-1} \frac{\partial^2 L}{\partial g \partial
  \epsilon}(g_0,0)  \neq 0.
\]
Since $\frac{\partial L}{\partial g}(g_{L}(\epsilon),\epsilon)=0$,
Theorem~\ref{OldTheo} gives that $\Lambda(g_L(\epsilon),\epsilon)=0$
too. Differentiating this expression with respect to $\epsilon$ and
evaluating at $\epsilon=0$ shows
\[
0 = \frac{\partial\Lambda}{\partial g}(g_0,0)g_L'(0) +
\frac{\partial \Lambda}{\partial \epsilon}(g_0,0), \mbox{
  hence }
\frac{\partial\Lambda}{\partial g}(g_0,0) = 
\frac{\partial \Lambda}{\partial \epsilon}(g_0,0)  
\left(\frac{\partial^2 L}{\partial g \partial \epsilon}(g_0,0)
\right)^{-1}
\frac{\partial^2 L}{\partial g^2}(g_0,0)\neq 0.
\]
% A Taylor series expansion of
% $\Lambda(g,\epsilon)$ about $(g_0,0)$ gives
% \begin{equation}\label{LamEq}
%   \Lambda(g,\epsilon) = \frac{1}{m!}(g-g_0)^m
% \frac{\partial^m \Lambda}{\partial g^m}(g_0,0)+
% \epsilon\frac{\partial \Lambda}{\partial \epsilon}(g_0,0)+
% O\left(\epsilon^2+\epsilon(g-g_0)+(g-g_0)^{m+1}\right),
% \end{equation}
% where $m$ is the smallest value ($\geq 1$) such that $\frac{\partial^m
%   \Lambda}{\partial g^m}(g_0,0)\neq0$.  So at $g=g_L(\epsilon)$
% \begin{eqnarray}
% 0 &=& \Lambda(g_L(\epsilon),\epsilon) =
% \frac{1}{m!}(g_L(\epsilon)-g_0)^m 
% \frac{\partial^m \Lambda}{\partial g^m}(g_0,0) + 
% \epsilon\frac{\partial \Lambda}{\partial
%   \epsilon}(g_0,0)+O\left(\epsilon^2\right)\\
% &=& \frac{\epsilon^m}{m!}\left(\frac{\partial^2 L}{\partial g \partial
%   \epsilon}(g_0,0)  \left(\frac{\partial^2 L}{\partial
%     g^2}(g_0,0)\right)^{-1}\right)^m 
% \frac{\partial^m \Lambda}{\partial g^m}(g_0,0) + 
% \epsilon\frac{\partial \Lambda}{\partial
%   \epsilon}(g_0,0)+O\left(\epsilon^2\right).
% \end{eqnarray}
% As $\frac{\partial^2 L}{\partial g \partial \epsilon}(g_0,0)\neq0$,
% balancing the powers of $\epsilon$ in the above expression shows that
% $m=1$ and $\frac{\partial \Lambda}{\partial
%   \epsilon}(g_0,0)\frac{\partial^2 L}{\partial
%   g^2}(g_0,0)=\frac{\partial \Lambda}{\partial g}(g_0,0)\frac{\partial
%   L}{\partial g\partial\epsilon}(g_0,0)$.  
%As $m=1$ we can conclude
And we can conclude that $g_0$ is a first order zero of the eigenvalue
$\Lambda(g)$, hence a first order turning point of $L(g)$ is associated
with a simple zero of $\Lambda(g)$.

\smallskip 
Next we consider the higher order critical points of $L(g)$.  Let $n$
be the smallest value ($n\geq 2$) such that $\frac{\partial^{n+1}
  L}{\partial g^{n+1}}(g_0,0)\neq0$, that is $\frac{\partial^k
  L}{\partial g^k}(g_0,0)=0$ for $k=1,\ldots, n$.  Then a Taylor series
gives
\[
\frac{\partial L}{\partial
  g}(g,\epsilon)=\frac{1}{n!}(g-g_0)^n\frac{\partial^{n+1} L}{\partial
  g^{n+1}}(g_0,0) +\epsilon \frac{\partial^2 L}{\partial g \partial
  \epsilon}(g_0,0)
+O\left(\epsilon^2+\epsilon(g-g_0)+(g-g_0)^{n+1}\right).
\] 
In order to be able to apply the implicit function theorem and get a
smooth curve of critical points $g_L$, we replace $\epsilon$ with the
new variable $\eta$ such that
\[
\eta^n = %\sqrt[n]
{-\epsilon \left(\frac{\partial^{n+1} L}{\partial
  g^{n+1}}(g_0,0) \right)^{-1}\frac{\partial^2 L}{\partial g \partial
  \epsilon}(g_0,0)}.
\]
If $n$ is even, then we restrict $\epsilon$ to only positive or only
negative values. However, $\eta$ is allowed to have both positive and
negative values. We define 
\[
\widetilde{L}(g,\eta)=
L\left(g,-\eta^n\left(\frac{\partial^2 L}{\partial g \partial
    \epsilon}(g_0,0) \right)^{-1}\frac{\partial^{n+1} L}{\partial
  g^{n+1}}(g_0,0)\right). 
\]
Now the implicit function theorem gives that there exist a unique
curve $g_L(\eta)$ for $\eta$ near zero such that $\frac{\partial
  \widetilde L}{\partial g}(g_{L}(\eta),\eta)=0$ and
$g_L(\eta)=g_0+\eta+ O(\eta^2)$, hence $g_L'(0)=1$.  Now we can
proceed as before. As 
$0=\frac{\partial \widetilde L}{\partial g}(g_L(\eta),\eta) = 
\left. \frac{\partial L}{\partial g}\left(g,-\eta^n\left(
\frac{\partial^2 L}{\partial g \partial \epsilon}(g_0,0)\right)^{-1}
\frac{\partial^{n+1} L}{\partial
  g^{n+1}}(g_0,0)\right)\right|_{g=g_L(\eta)}$, Theorem~\ref{OldTheo}
gives that 
$\Lambda\left(g_L(\eta),-\eta^n\left(\frac{\partial^2 L}{\partial
      g \partial \epsilon}(g_0,0) \right)^{-1}\frac{\partial^{n+1}
    L}{\partial g^{n+1}}(g_0,0)\right)=0$. Differentiating this
expression $n$ times with respect to $\eta$ and evaluating at
$\eta=0$, gives that for $k=1,\ldots,n-1$
\[
\frac{\partial^k\Lambda}{\partial g^k}(g_0,0)=0 \mbox{ and }
\frac{\partial^n\Lambda}{\partial g^n}(g_0,0) =
n!\,\frac{\partial\Lambda}{\partial \epsilon}(g_0,0)\left(\frac{\partial^2 L}{\partial
      g \partial \epsilon}(g_0,0) \right)^{-1}\frac{\partial^{n+1}
    L}{\partial g^{n+1}}(g_0,0) \neq0.
\]
Thus we have shown that an $n$-th order critical point of $L(g)$
corresponds to a $n$-th order zero of $\Lambda(g)$.
\end{proof}
Lemma~\ref{HysLem} equates the order of zeroes of $L'(g)$ with the
order of zeroes of $\Lambda(g)$.  This, together with the results of
\cite{KDDS11} and Sturm-Liouville theory, gives a necessary and
sufficient condition for a change of stability to occur for $g$ away
from any (existence) bifurcation point $g_{\rm bif}$.
\begin{theorem}\label{New1}
  Let the front $\uf(x;g)$, $g\neq g_{\rm bif}$, be a solution of
  \eqref{WaveEq}, such that all zeroes of $\partial_x\uf(x;g)$ are simple and
  the length of the middle interval of $\uf(x;g)$ is part of a smooth
  length curve $L(g)$.  There is a change in stability of $\uf(x;g)$ at
  $g=\widehat{g}\neq g_{\rm bif}$ if and only if
\begin{enumerate}
 \item $L'(\widehat{g})=0$;
\item the order of the zero $\widehat{g}$ of $L'(g)$ is odd;
\item the eigenfunction  $\Psi(\widehat{g})$ has no zeroes.
\end{enumerate}
If just i) and ii) are satisfied then the number of positive eigenvalues of the linearisation 
operator $\mathcal{L}(g)$ changes by one as $g$ crosses $\widehat{g}$.
\end{theorem}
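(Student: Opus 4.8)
The plan is to reduce the entire stability question to the sign of the largest discrete eigenvalue of the self-adjoint operator $\mathcal{L}(g)$, and then to read off each of the three conditions from the tools already assembled. Write $\Lambda_{\max}(g)$ for the largest eigenvalue of $\mathcal{L}(g)$. As recalled in the introduction, $\lambda$ is an eigenvalue of the spectral (stability) operator if and only if $\Lambda=\lambda(\lambda+\alpha)$ is an eigenvalue of $\mathcal{L}(g)$; the quadratic $\lambda^2+\alpha\lambda-\Lambda=0$ has product of roots $-\Lambda$ and sum of roots $-\alpha\le 0$, so $\Lambda<0$ forces both roots $\lambda$ into the closed left half-plane, whereas $\Lambda>0$ (negative product) gives a real pair of opposite sign and hence one strictly positive root. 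Thus the front is spectrally stable precisely when $\Lambda_{\max}(g)\le 0$ and unstable precisely when $\Lambda_{\max}(g)>0$, and a change of stability at $g=\widehat{g}$ is exactly a sign change of $\Lambda_{\max}$ there. Since $\mathcal{L}(g)$ is Sturm-Liouville its discrete eigenvalues are simple and ordered, and away from $g_{\rm bif}$ the curve $\Lambda_{\max}(g)$ is continuous, so a sign change forces $\Lambda_{\max}(\widehat{g})=0$.

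The first step is to characterise $\Lambda_{\max}(\widehat{g})=0$ by conditions (i) and (iii). By Theorem~\ref{OldTheo}, for $\widehat{g}\neq g_{\rm bif}$ the operator $\mathcal{L}(\widehat{g})$ has an eigenvalue zero if and only if $L'(\widehat{g})=0$, which is condition (i). By the Sturm oscillation property the eigenfunction of the $n$-th eigenvalue has exactly $n$ zeroes, so the eigenvalue zero is the \emph{largest} eigenvalue if and only if its eigenfunction $\Psi(\widehat{g})$ is nodeless, which is condition (iii). Hence $\Lambda_{\max}(\widehat{g})=0$ holds if and only if (i) and (iii) both hold: (i) puts zero into the spectrum and (iii) certifies that it sits at the top. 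If (i) holds but (iii) fails, then $\Psi$ has a zero, some eigenvalue exceeds zero, $\Lambda_{\max}(\widehat{g})>0$, and by continuity $\Lambda_{\max}>0$ on a whole neighbourhood of $\widehat{g}$; the front is unstable on both sides and no change of stability occurs.

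The second step is to decide, given $\Lambda_{\max}(\widehat{g})=0$, whether $\Lambda_{\max}$ actually crosses zero. Near $\widehat{g}$ the largest eigenvalue coincides with the single smooth branch $\Lambda(g)$ that vanishes at $\widehat{g}$, and by Lemma~\ref{HysLem} the order of this zero of $\Lambda(g)$ equals the order of the stationary point of $L(g)$. An odd-order zero forces a sign change of $\Lambda_{\max}$, hence a change of stability, whereas an even-order zero produces a local extremum merely touching zero (the left panel of Figure~\ref{PartIITurningEValue}), with no sign change and therefore no change of stability. This is precisely condition (ii). Combining the two steps gives the claimed equivalence: a change of stability at $\widehat{g}$ occurs if and only if (i), (ii) and (iii) all hold.

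For the final assertion, suppose only (i) and (ii) hold. Condition (i) places zero in the spectrum and, via Theorem~\ref{OldTheo} and Lemma~\ref{HysLem}, the smooth branch $\Lambda(g)$ through that point has a zero whose order matches the stationary point of $L$; condition (ii) makes this order odd, so $\Lambda(g)$ genuinely changes sign as $g$ crosses $\widehat{g}$. Because the Sturm-Liouville eigenvalues are simple, no second eigenvalue can vanish at $\widehat{g}$, so exactly one eigenvalue passes through zero and the number of positive eigenvalues of $\mathcal{L}(g)$ changes by one, irrespective of whether that eigenvalue is the largest. The only genuinely delicate point is the role of (iii): the count of positive eigenvalues is controlled by (i) and (ii) alone, but a change of \emph{stability} additionally requires the crossing eigenvalue to be the top of the spectrum, and it is the Sturm oscillation theorem---the nodelessness of $\Psi$---that distinguishes these two situations. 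I expect this separation of "one eigenvalue crosses" from "the stability-determining eigenvalue crosses" to be the main conceptual obstacle, the rest being continuity and order-counting that the preceding lemma has already supplied.
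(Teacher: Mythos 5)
Your proposal is correct and follows essentially the same route as the paper's proof: Theorem~\ref{OldTheo} identifies the eigenvalue zero with $L'(\widehat{g})=0$, Lemma~\ref{HysLem} matches the order of that zero of $\Lambda(g)$ to the order of the stationary point of $L(g)$ so that odd order means a genuine sign crossing, and Sturm--Liouville oscillation theory (nodelessness of $\Psi(\widehat{g})$) certifies that the crossing eigenvalue is the largest one. Your added explicit analysis of the quadratic $\Lambda=\lambda(\lambda+\alpha)$ merely spells out what the paper's introduction already asserts, so it is elaboration rather than a different method.
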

\begin{proof}\mbox{}\newline
  As stated in Theorem~\ref{OldTheo}, the results of \cite{KDDS11} (in
  particular Theorem 4.5) gives that the linearisation operator
  $\mathcal{L}(g)$ has an eigenvalue zero for $g=\widehat{g}$ if and only
  if $L'(\widehat{g})=0$.  From Lemma~\ref{HysLem}, we can now conclude
  that if the order of the zero $\widehat{g}$ of $L'(g)$ is odd then the
  sign of $\Lambda(g)$ changes as $g$ passes from one side of
  $\widehat{g}$ to the other.  That is, the number of positive eigenvalues
  of the linearisation operator $\mathcal{L}(g)$ changes by one.  If
  the order of the zero $\widehat{g}$ of $L'(g)$ is even then there is the
  same number of positive eigenvalues either side of $\widehat{g}$.  A
  change of stability occurs at an eigenvalue zero if it is the
  largest eigenvalue and a positive eigenvalue is lost/gained as $g$
  moves from one side of $\widehat{g}$ to the other.  By Sturm-Liouville
  theory, an eigenvalue is the largest eigenvalue if and only if its
  associated eigenfunction $\Psi(\widehat{g})$ has no zeroes.  
%This  completes the proof.
\end{proof}
Note that the eigenfunction $\Psi(\widehat{g})$ is constructed
explicitly in %the proof of the result of
\cite{KDDS11}, making it easy to check criterion~\emph{iii)} and that
non-linear stability can be concluded from linear stability.

\section{The Evans Function and Branch Stability}
\label{sec.evans}

Now we have characterised the fronts at which a change of stability
will occur, a next question is how the stable and unstable branch can
be identified. Due to translational symmetry, evolutionary homogeneous
Hamiltonian wave equations like the sine-Gordon equation, the
nonlinear Schrodinger (NLS) equation or the Korteweg-de Vries (KdV)
equation have families of solitary wave solutions, which can be
characterised as constrained critical points of the Hamiltonian on
level sets of the generalised momentum, i.e, the constant of motion
associated with the translational symmetry. For such equations, the
orbital (in)stability of the solitary waves can be associated with the
slope of the momentum-velocity curve. For the NLS equation this
criterion is known as the Vakhitov-Kolokolov condition~\cite{VK73}. A
general criterion for a large class of Hamiltonian wave equations with
symmetry was derived by Grillakis, Shatah and Strauss~\cite{GSS87,
  GSS90}. The proof builds on the fact that the solitary wave is a
constrained critical point with a finite dimensional negative
subspace. Pego and Weinstein~\cite{PW92} proved a linear instability
criterion for a larger class of Hamiltonian wave equations by using
the socalled Evans function.

The Evans function was named by Alexander, Gardner and
Jones~\cite{AGJ90} after J.W. Evans, who used the concept to study the
stability of nerve impulses~\cite{Evans75}. To define the Evans
function, one considers the eigenvalue problem associated with the
linear stability of stationary or travelling wave as a boundary value
problem for an ordinary differential equation. An eigenvalue $\Lambda$
will exist if the stable subspace for $x\to+\infty$ intersects with
the unstable subspace for $x\to-\infty$. The angle between those two
subspaces is measured by the Evans function, a Wronskian-like analytic
function of~$\Lambda$.

The inhomogeneity in our wave equation breaks the translational
symmetry and the criteria derived for homogeneous equations do not
apply anymore. However, our stability criterion is still related to a
vanishing derivative, namely $L'(g)=0$. Looking at
Figure~\ref{clength}, one could expect that $L'(g)>0$ might be a
necessary condition for the stability of the stationary waves.
However, if one considers the examples in~\cite{DDKS09}, there is one
example for which the stable fluxons have $L'(g)<0$ (Corollary~6 and
Figure~9) and one example for which there is a curve of stable fluxons
and the curve folds twice. On this curve, we have both $L'(g)<0$ and
$L'(g)>0$ (Theorem~8 and Figure~17).  

So on first view, there doesn't seem to be a relation between the
slope of the $g$-$L$ curve and stability. However, a closer look at our
criterion for the existence of an eigenvalue zero in
Theorem~\ref{OldTheo} shows that it involves the product of $L'(g)$
and the bifurcation functions $\Pno(g)$ and $\Ptw(g)$.  Reflecting on
the proof the existence of the eigenvalue zero in~\cite[Theorem
4.5]{KDDS11}, we note that it gives explicit expressions for the
stable and unstable subspaces. In other words, it gives all
ingredients to find the Evans function at $\Lambda=0$.  The Evans
function for large values of $\Lambda$ can also be determined and by
combining those two facts, the following instability criterion can be
proved.
\begin{theorem}\label{th.evans}
  Let the front $\uf(x;g)$ be a solution of \eqref{WaveEq},
  such that all zeroes of $\partial_x\uf(x;g)$ are simple and the length of the
  middle interval of $\uf(x;g)$ is part of a smooth length curve $L(g)$.
Define $\Pi(g)$ to be positive if $(\uf)_x(x)$ has the same sign for
$x\to+\infty$ and $x\to -\infty$ and negative if there are different
signs, i.e., 
\[
\Pi(g) = \sign\left(
\lim_{x\to\infty}e^{\left(\sqrt{\Lambda-V_r''(u_r)}-\sqrt{\Lambda-V_l''(u_l)}\right)\,x}
\, (\uf)_x(x)\,(\uf)_x(-x)\right);
\]
If 
\[
\Pi(g)\,\Pno(g)\, \Ptw(g)\, L'(g) >0,
\]
then the front $\uf(x;g)$ is linearly unstable, i.e., the
linearisation operator $\mathcal{L}(g)$ has a strictly positive eigenvalue.
\end{theorem}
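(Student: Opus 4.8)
The plan is to construct the Evans function $E(\Lambda)$ for the scalar eigenvalue problem $\mathcal{L}(g)\psi=\Lambda\psi$ and to detect a strictly positive eigenvalue via a sign change of $E$ on the real interval $\Lambda\in[0,\infty)$. Since the continuous spectrum of the self-adjoint operator $\mathcal{L}(g)$ lies on the negative half-line bounded away from zero, for every real $\Lambda\ge0$ both asymptotic rates $\mu_l(\Lambda)=\sqrt{\Lambda-V_l''(u_l)}$ and $\mu_r(\Lambda)=\sqrt{\Lambda-V_r''(u_r)}$ are real and positive. Hence there is, up to scaling, a unique solution $\psi_-(\,\cdot\,;\Lambda)$ decaying as $x\to-\infty$ and a unique solution $\psi_+(\,\cdot\,;\Lambda)$ decaying as $x\to+\infty$. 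Normalising these by their leading asymptotics $\psi_\pm\sim e^{\mp\mu_{r,l}(\Lambda)|x|}$ with coefficient $+1$, I set $E(\Lambda)=\psi_-\psi_+'-\psi_-'\psi_+$. Because the piecewise equation has no first-order term and $\psi,\psi'$ are continuous across $x=\pm L$, this Wronskian is independent of $x$; it is real-analytic in $\Lambda$ on $[0,\infty)$ and vanishes exactly at the eigenvalues of $\mathcal{L}(g)$.

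First I would fix the sign of $E$ for large $\Lambda$. Evaluating the Wronskian in the right interval, where $\psi_-=A(\Lambda)e^{\mu_r x}+B(\Lambda)e^{-\mu_r x}$ is the continuation of the solution emanating from $-\infty$, gives $E(\Lambda)=-2\mu_r(\Lambda)A(\Lambda)$, so that $E=0$ iff the growth coefficient $A$ vanishes, i.e.\ iff $\Lambda$ is an eigenvalue. As $\Lambda\to+\infty$ the potential is a vanishing perturbation of the free problem, for which $\psi_-=e^{\sqrt{\Lambda}x}$ and $A\equiv1$; a comparison estimate then yields $A(\Lambda)>0$ and hence $E(\Lambda)=-2\mu_r A<0$ for all sufficiently large $\Lambda$. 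Thus $E$ has the fixed sign $\sigma_\infty=-1$ near $+\infty$, consistent with the absence of eigenvalues there.

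Next, and this is the heart of the argument, I would evaluate $E(0)$ and relate it to $\Pno(g)\,\Ptw(g)\,L'(g)$. At $\Lambda=0$ the translational mode $(\uf)_x$ solves $\mathcal{L}(g)\phi=0$ on each homogeneous interval; writing its tails as $(\uf)_x\sim C_\pm e^{\mp\mu_{r,l}(0)|x|}$ gives $\psi_+|_{\Lambda=0}=(\uf)_x/C_+$ on $I_r$ and $\psi_-|_{\Lambda=0}=(\uf)_x/C_-$ on $I_l$. On the middle interval the two independent solutions of $\mathcal{L}(g)\phi=0$ are $(\uf)_x$ and $\partial_g\uf$, since differentiating $\uf''+V_m'(\uf)=0$ in $g$ gives $\mathcal{L}(g)\,\partial_g\uf=0$ there. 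Continuing $\psi_-$ from $I_l$ across $x=-L$ by matching $\psi,\psi'$ and evaluating the constant Wronskian $E(0)$ on $I_r$, while using the relations~\eqref{eqrel2} together with $2L(g)=\int_{\uno(g)}^{\utw(g)}\frac{du}{p(u,g)}$ to express the $g$-derivatives, the boundary matching data produce precisely the factors $\Pno(g)$ and $\Ptw(g)$, and the $\partial_g\uf$ component contributes $L'(g)$. This is exactly the computation underlying the eigenvalue-zero condition of Theorem~\ref{OldTheo}, whose explicit stable and unstable subspaces are recorded in \cite[Theorem~4.5]{KDDS11}; collecting constants yields $E(0)=\frac{\kappa}{C_-C_+}\,\Pno(g)\,\Ptw(g)\,L'(g)$ with $\kappa>0$. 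Finally, $1/(C_-C_+)$ carries the orientation: $\sign(C_-C_+)=\Pi(g)$ is the content of the definition of $\Pi$, so $\sign E(0)=\sign(\Pi(g)\,\Pno(g)\,\Ptw(g)\,L'(g))$.

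To conclude, suppose $\Pi(g)\,\Pno(g)\,\Ptw(g)\,L'(g)>0$. Then $\sign E(0)=+1$, which is opposite to $\sigma_\infty=-1$, so by analyticity and the intermediate value theorem $E$ has a zero $\Lambda_\ast>0$, i.e.\ $\mathcal{L}(g)$ has a strictly positive eigenvalue and $\uf(x;g)$ is linearly unstable. The main obstacle is the third paragraph: pinning down the \emph{sign} of $E(0)$, which requires threading the orientation factor $\Pi(g)$ through the normalisations of $\psi_\pm$ and matching the middle-interval Wronskian determinant to $\Pno(g)\Ptw(g)L'(g)$ with the correct overall sign (in particular $\kappa>0$), using the explicit subspaces from \cite{KDDS11}. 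A secondary check is that the large-$\Lambda$ sign $\sigma_\infty$ is genuinely opposite to $\sign E(0)$ under the hypothesis, so that the crossing, and hence the unstable eigenvalue, is guaranteed.
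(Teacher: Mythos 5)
Your proposal is correct and follows essentially the same route as the paper: an Evans function formed from the Levinson-normalised decaying solutions, its value at $\Lambda=0$ computed from the translational mode $(\uf)_x$ on the outer intervals plus a second zero-mode on the middle interval, the fixed opposite sign as $\Lambda\to\infty$ (which the paper establishes rigorously by rescaling $y=\sqrt{\Lambda}\,x$ and invoking Coppel's roughness theorem, the careful version of your transmission-coefficient estimate $A(\Lambda)\to1$), and the intermediate value theorem. The single step you flag but leave as an assertion --- the positive constant $\kappa$ in $E(0)=\frac{\kappa}{C_-C_+}\Pno(g)\Ptw(g)L'(g)$ --- is precisely what the paper carries out using the explicit middle-interval solution from \cite{KDDS11} (built on $(\uf)_x$ and $(\uf)_x\int (\uf)_x^{-2}$, equivalent to your basis $\{(\uf)_x,\partial_g\uf\}$ since the Wronskian of $u_x$ and $u_g$ is $1$) together with the identity $\Pno(g)+\Ptw(g)+\Pno(g)\Ptw(g)I(g)=-2L'(g)\Pno(g)\Ptw(g)$, where $I(g)=\int_{-L}^{L}(\uf)_x^{-2}d\xi$, yielding $D(0,g)=-2C_-C_+L'(g)\Pno(g)\Ptw(g)$; since your $E=-D$ and $\sgn(C_-C_+)=\Pi(g)$, this confirms your sign bookkeeping with $\kappa=2$.
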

\begin{proof}
The eigenvalue problem $\mathcal{L}(g) \Psi =\Lambda \Psi$ can be
written as the following first order system of ODEs
\begin{equation}\label{eq.system}
\bU_x = A(x;\Lambda,g)\, \bU\,, \mbox{ with }
\bU =
\begin{pmatrix}
  \Psi\\\Psi_x
\end{pmatrix} \mbox{ and } 
A =
\begin{pmatrix}
  0& 1\\ \Lambda-\frac{\partial^2V(\uf(x;g),x;g)}{\partial u^2}&0
\end{pmatrix}.
\end{equation}
The matrix $A(x;\Lambda,g)$ is piecewise smooth in $x$ as there are
jumps at $x=\pm L$. An eigenfunction is a continuous  solution
$\bU(x;\Lambda,g)$. As $A$ is only piecewise smooth, we can't expect
$\bU$ to be differentiable.
 
By construction, the front $\uf$ decays exponentially fast to their
steady states for $x\to\pm\infty$. If we denote these steady states by
$u_l$ (for $x\to-\infty$) and $u_r$ (for $x\to+\infty$), then these
steady states are saddles of the ODEs $u_{xx} + V_i'(u)$, $i=l$ or
$r$. Note that these steady states do not depend on $g$, the
Hamiltonian in the middle interval. The fact that the steady states
are saddles implies that $V_i''(u_i)<0$ for $i=l,r$.  Thus for
$x\to\pm\infty$, the matrix $A$ converges to
\[
A_{\pm \infty} (\Lambda)  = \lim_{x\to\pm\infty} A(x;\Lambda,g) =
\begin{pmatrix}
  0& 1\\ \Lambda-{V_i''(u_i)}&0
\end{pmatrix}, \mbox{ with $i=l$ for $-\infty$ and $i=r$ for $+\infty$.}
\]
For $\Lambda\geq0$, the term $(\Lambda-{V_i''(u_i)})$ is strictly positive, and
hence the eigenvalues of $A_{\pm\infty}(\Lambda)$ are
$\pm\sqrt{\Lambda-{V_i''(u_i)}}$, thus one positive and one negative
eigenvalue. Associated eigenvectors are
\[
\bU_\pm^i (\Lambda)=
\begin{pmatrix}
  1\\ \pm\sqrt{\Lambda-{V_i''(u_i)}}
\end{pmatrix}.
\]
The exponential decay of the  front $\uf$ to the steady states at
infinity ($u_l$, respectively $u_r$) implies that there is some $C>0$,
independent of $\Lambda$, such that
\[
\int_{-\infty}^{+\infty} \| R(x;\Lambda,g)\|\,dx <
C, \mbox{where } 
R(x;\Lambda,g) = \left\{
\begin{arrayl}
A(x;\Lambda,g)-A_{-\infty}(\Lambda),&&x\leq L,\\ 
A(x;\Lambda,g)-{A}_{+\infty}(\Lambda), &&x>L.
\end{arrayl}
\right.
\]
Now Levinson theory \cite{Coppel65,Eastham89}, \cite[Proposition
1.2]{PW92} implies that that there exists unique
solutions~$\bU_+(x;\Lambda,g)$ and~$\bU_-(x;\Lambda,g)$ of the ODE
system~\eqref{eq.system}, which are analytic in $\Lambda$ for
$\Re(\Lambda)\geq 0$ and satisfy
\[
\lim_{x\to+\infty} e^{\sqrt{\Lambda-{V_r''(u_r)}}\,x}\,\bU_+(x;\Lambda,g)=
\bU^r_-(\Lambda)\,, \quad
\lim_{x\to-\infty} e^{-\sqrt{\Lambda-{V_l''(u_l)}}\,x}\,\bU_-(x;\Lambda,g)=
\bU^l_+(\Lambda)\,.
\]
Thus the eigenvalue problem $\mathcal{L}(g) \Psi =\Lambda \Psi$ has a
solution in $H^1(\mathbb{R})$ if and only if the solutions
$\bU_+(x;\Lambda,g)$ and $\bU_-(x;\Lambda,g)$ are identical up to
scaling. The Evans function measures the angle between
$\bU_+(x;\Lambda,g)$ and $\bU_-(x;\Lambda,g)$  and can defined as the
determinant of the matrix with those two vectors as columns:
\[
D(\Lambda,g) = \det 
\begin{pmatrix}
\bU_+(x;\Lambda,g) \mid \bU_-(x;\Lambda,g)  
\end{pmatrix}.
\]
Note that there are many equivalent definitions of the Evans function,
but, for our proof here, this definition is most convenient. The fact that
the Evans function does not depend on~$x$ follows from the
Abel-Liouville Theorem and $\mbox{Tr}(A)=0$. As said before, the
eigenvalue problem is a Sturm-Liouville problem and if there are any
eigenvalues, they have to be real. Thus we can concentrate on the
Evans function for $\Lambda\in\mathbb{R}$.

\smallskip 
Next we will determine the Evans function for $\Lambda=0$, using the
expressions for $U_+(L;0,g)$ and $U_-(L;0,g)$ as implicitly given in
the proof of Theorem~4.5 in~\cite{KDDS11}. First we observe that the
derivative of the front $(\uf)_x$ satisfies $\mathcal{L}(g) (\uf)_x
=0$ on the three intervals $|x|>L$ and $|x|<L$. However, usually this
is not an $H^1(\mathbb{R})$ function, so it is not an
eigenfunction. However, on the intervals $|x|>L$, this function is
smooth. Moreover, the vector function associated with this function is
exponentially decaying, so we get
\[
U_+(x;0,g) = C_+
\begin{pmatrix}
  (\uf)_x\\(\uf)_{xx}
\end{pmatrix}, \mbox{ with } C_+(g)
=\lim_{x\to\infty}e^{\sqrt{\Lambda-{V_r''(u_r)}}\,x}\,  (\uf)_x(x);
\mbox{ for } x>L,
\]
and
\[
U_-(x;0,g) = C_-
\begin{pmatrix}
  (\uf)_x\\(\uf)_{xx}
\end{pmatrix}, \mbox{ with } C_-(g)
=\lim_{x\to-\infty}e^{-\sqrt{\Lambda-{V_l''(u_l)}}\,x}\,  (\uf)_x(x).
\mbox{ for } x<-L,
\]
In the middle interval, the solution is a linear combination of
functions of the form 
\[
(\uf)_x \mbox {  and } \frac{1}{(\uf)_x}\int \frac{dx}{(\uf)_x^2}.
\]
 We focus on the least technical case, when
$\uf$ is monotonic on the middle interval hence $(\uf)_x(x)\neq 0$ for
$|x|<L$. Details for more technical case are similar and can be
deducted in a similar way from the details in the proof
in~\cite{KDDS11}.

If $\uf$ is monotonic, then \cite[p.\ 419-420]{KDDS11} gives for the middle
interval 
\[
U_-(x;0,g) = C_-
\begin{pmatrix}
\displaystyle  (\uf)_x(x) + \Pno\, (\uf)_x(x)\int_{-L}^x\frac{d\xi}{(\uf)_x^2(\xi)}\\[4mm]
\displaystyle (\uf)_{xx}(x) + \Pno\, (\uf)_{xx}(x)\int_{-L}^x\frac{d\xi}{(\uf)_x^2(\xi)}
+ \frac{\Pno}{(\uf)_x(x)}
\end{pmatrix}.
\]
So now we can determine the Evans function at $\Lambda=0$ by
evaluation at $x=L$
\[
D(0,g) = C_-C_+\,\det
\begin{pmatrix}
  \ptw & \ptw(1 + \Pno\,I)\\ 
-V_r'(\utw) & -V_m'(\utw) (1+\Pno\,I)+\frac{\Pno}{p_r}
\end{pmatrix} = C_-C_+\,(\Pno+\Ptw +\Pno\Ptw\,I),
\]
where $I$ stands for
$I(g)=\displaystyle\int_{-L}^L\frac{d\xi}{(\uf)_x^2(\xi;g)}$. From~\cite[eq.\
(32)]{KDDS11}, it follows that $\Pno(g)+\Ptw(g)
+\Pno(g)\Ptw(g)\,I(g)=-2L'(g)\Pno(g)\Ptw(g)$, hence
\begin{equation}\label{eq.evans0}
D(0,g)=-2C_-(g)C_+(g)\,L'(g)\Pno(g)\Ptw(g).
\end{equation}

Next we consider the limit of the Evans function $D(\Lambda,g)$ for
$\Lambda\to\infty$. First we observe that the matrix $A(x;\Lambda,g)$
can be written as 
\[
A(x;\Lambda,g) = \sqrt{\Lambda}\, M^{-1}(\Lambda) \, [B_0 -
B_1(x;\Lambda,g)] \, M(\Lambda),\]
with
\[
B_0 =\begin{pmatrix}
  0&1\\1&0
\end{pmatrix}, \quad
B_1 (x;\Lambda,g)=\frac1{\Lambda}\,\begin{pmatrix}
  0&0\\{\frac{\partial^2V(\uf(x;g),x;g)}{\partial u^2}}&0
\end{pmatrix}, \quad
M (\Lambda) =\begin{pmatrix}
  \sqrt\Lambda&0\\0&1
\end{pmatrix}.
\]
Introducing the new spatial coordinate $y=\sqrt\Lambda x$ and vector
function $\bW(y) = \bU\left(\frac{y}{\sqrt\Lambda}\right)$, we see
that the ODE~\eqref{eq.system} can be rewritten as
\begin{equation}\label{eq.sys_as}
\bW_y = \left[B_0 - B_1
  \left(\frac{y}{\sqrt\Lambda};\Lambda,g\right)\right]\,\bW.
\end{equation}
As the front $\uf(x,g)$ is bounded in $x$, it follows immediately that
${\frac{\partial^2V(\uf(x;g),x;g)}{\partial u^2}}$ is bounded in~$x$
and $B_1(x;\Lambda,g)=\mathcal{O}(\Lambda^{-1})$, uniform in~$x$ and
$\Lambda$.  With the Roughness Theorem (Coppel~\cite{Coppel78}), it
follows that  the stable and unstable subspaces of~\eqref{eq.sys_as} are
order~$\Lambda^{-1}$ close to those of the constant system $\bW_y =
B_0\bW$.  Thus for $\Lambda$ large, the Evans function is
\[
D(\Lambda,g) = \det
\begin{pmatrix}
  \bW_+(\Lambda\,x)\mid \bW_-(\Lambda\,x)
\end{pmatrix} = \det
\begin{pmatrix}
1&1 \\-1&1
\end{pmatrix}+\mathcal{O}(\Lambda^{-1}) = 
2+\mathcal{O}(\Lambda^{-1}) .
\]
For $\Lambda\geq 0$, the Evans function is analytic and therefore also
continuous. So the intermediate value theorem gives that if the
Evans function at $\Lambda=0$ is negative, there must be some
$\Lambda>0$ at which the Evans function vanishes. Thus this value
of $\Lambda$ is a  positive eigenvalue of the linearisation operator
and the front $\uf(x;g)$ is unstable.
\end{proof}

Combining Theorem~\ref{th.evans} with Theorem~\ref{New1} from the
previous section, gives the following criterion which characterises a
stable branch of fronts.
\begin{cor}\label{cor.branch}
    Let the front $\uf(x;g)$, $g\neq g_{\rm bif}$, be a solution of
  \eqref{WaveEq}, such that all zeroes of $\partial_x\uf(x;g)$ are simple and
  the length of the middle interval of $\uf(x;g)$ is part of a smooth
  length curve $L(g)$.  There is a change in stability of $\uf(x;g)$ at
  $g=\widehat{g}\neq g_{\rm bif}$ if and only if
\begin{enumerate}
 \item $L'(\widehat{g})=0$;
\item the order of the zero $\widehat{g}$ of $L'(g)$ is odd;
\item the eigenfunction  $\Psi(\widehat{g})$ has no zeroes.
\end{enumerate}
Furthermore, the stable fronts are on the branch with $\Pi(g)\,\Pno(g)\,
\Ptw(g)\, L'(g) <0$.   
\end{cor}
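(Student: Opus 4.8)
The first part of the statement — the equivalence of a change of stability with conditions i)--iii) — is verbatim Theorem~\ref{New1}, so the plan is to concentrate entirely on the branch identification in the final sentence. The overall strategy is to read off the sign of the Evans function at $\Lambda=0$ from~\eqref{eq.evans0} and combine it with the change-of-stability dichotomy of Theorem~\ref{New1}.

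Since $C_+(g)$ and $C_-(g)$ are exactly the coefficients of the exponentially decaying modes of $(\uf)_x$ at $+\infty$ and $-\infty$, the sign factor $\Pi(g)$ from Theorem~\ref{th.evans} equals $\sign(C_+(g)C_-(g))$: it is positive precisely when these modes have matching signs. Feeding this into~\eqref{eq.evans0} gives $\sign D(0,g) = -\sign\bigl(\Pi(g)\,\Pno(g)\,\Ptw(g)\,L'(g)\bigr)$, while the large-$\Lambda$ estimate in the proof of Theorem~\ref{th.evans} yields $D(\Lambda,g)\to 2>0$. Hence the product being positive is the same as $D(0,g)<0$, which by the intermediate value theorem forces a positive eigenvalue; this recovers Theorem~\ref{th.evans} and already shows that any stable front obeys $\Pi\,\Pno\,\Ptw\,L'\le 0$, in fact $<0$, since away from $g_{\rm bif}$ one has $\Pno,\Ptw\neq0$, $\Pi=\pm1$ and $L'\neq0$ in the interior of a branch.

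For the converse I would work across a genuine change-of-stability point $\widehat{g}$, where i)--iii) hold. Because iii) forces the eigenfunction of the vanishing eigenvalue to have no zeroes, Sturm--Liouville theory identifies it as the ground state, i.e.\ the largest eigenvalue. On the stable side of $\widehat{g}$ this largest eigenvalue is strictly negative, so every eigenvalue is negative and $\mathcal{L}(g)$ has no positive eigenvalue at all; consequently $D(0,g)$ has no zero on $(0,\infty)$ and, sharing the sign of $D(\infty)=2>0$, is positive there, which by the sign identity above is exactly $\Pi\,\Pno\,\Ptw\,L'<0$. Since the order of the zero of $L'$ at $\widehat{g}$ is odd, $L'$ changes sign there while $\Pi$, $\Pno$, $\Ptw$ retain theirs (away from $g_{\rm bif}$), so the product flips sign precisely where the stability flips, and between consecutive change-of-stability points neither the stability nor the sign of the product changes. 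The stable branch is therefore exactly the one on which $\Pi\,\Pno\,\Ptw\,L'<0$.

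The step I expect to be the main obstacle is this converse, and specifically the passage from the sign of the Evans function to an actual count of positive eigenvalues. The sign of $D(0,g)$ only controls the parity of that count via the intermediate value theorem, so $D(0,g)>0$ by itself permits an even but nonzero number of positive eigenvalues. The ingredient that removes this ambiguity is condition iii): by pinning the crossing eigenvalue to the ground state it forces the count on the stable side to be zero rather than merely even. I would therefore be careful to invoke the Sturm--Liouville simplicity and ordering of the eigenvalues at exactly this point, and to note that the local conclusion then propagates to every branch.
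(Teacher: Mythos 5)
Your proposal is correct and takes essentially the same route as the paper, which offers no separate proof and simply states that the corollary follows by combining Theorem~\ref{New1} (conditions i)--iii)) with the instability criterion of Theorem~\ref{th.evans}; your identification $\Pi(g)=\sign\bigl(C_+(g)C_-(g)\bigr)$ and the resulting sign identity $\sign D(0,g)=-\sign\bigl(\Pi(g)\,\Pno(g)\,\Ptw(g)\,L'(g)\bigr)$ from~\eqref{eq.evans0}, together with $D(\Lambda,g)\to 2$ as $\Lambda\to\infty$, are exactly the intended mechanism. Your additional care on the converse side --- noting that the sign of $D(0,g)$ only controls the parity of the positive-eigenvalue count and using condition iii) with Sturm--Liouville ordering to pin the count on the stable side to zero, so that the product's sign flip at an odd-order zero of $L'$ tracks the stability flip --- is a sound filling-in of a detail the paper leaves implicit.
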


\section{Multiple Middle Intervals}\label{sec.multiple}

So far we have considered an inhomogeneous wave equation with three
intervals. A natural extension is an inhomogeneous wave equation with
$N+2$ intervals, i.e.,
\begin{equation}\label{WaveEqN}
  u_{tt}=u_{xx}+\frac{\partial}{\partial u}V(u,x;I_l,I_{m_1}\ldots,I_{m_N},I_r)-\alpha u_t.  
\end{equation}
The potential $V(u,x;I_l,I_{m_1}\ldots,I_{m_N},I_r)$ consists of $N+2$
smooth ($C^3$) functions $V_i(u)$, defined on $N+2$ disjoint open
intervals~$I_i$ of the real spatial axis, such that
$\mathbb{R}=\overline{\cup I_i}$.
%That is $I_l$ is the left
%interval, $I_r$ is the right interval and $I_m$
%is the middle intervals.  
The $N$ middle intervals have lengths $L_1,\ldots, L_N$ and associated
Hamiltonians $g_1,\ldots, g_N$. In~\cite[Theorem 6.1]{KDDS11}, it is
shown that, away from the existence bifurcation points, the
linearisation about a front $\uf(g_1,\ldots,g_N)$ has an eigenvalue
zero if and only if the determinant of the Jacobian $\frac{\partial
  (L_1,\ldots,L_N)}{\partial (g_1,\ldots, g_N)}$ vanishes. We will now
derive a condition on the length functions that determines whether the
eigenvalue zero is related to a change in the number of positive
eigenvalues.

Let's first consider the case of two middle intervals, i.e., $N=2$ and
assume that there is some point $(\widehat g_1,\widehat g_2)$ away
from the existence bifurcation points, such that the determinant of
the Jacobian $\frac{\partial (L_1,L_2)}{\partial (g_1,g_2)}$ vanishes
at $(g_1,g_2)=(\widehat g_1,\widehat g_2)$ and hence the linearisation
about $\uf(\widehat g_1,\widehat g_2)$ has an eigenvalue zero.  We
define the values of the length functions at this point to be
$L_1(\widehat g_1,\widehat g_2)=\widehat L_1$ and $L_2(\widehat
g_1,\widehat g_2)=\widehat L_2$.  In~\cite{KDDS11} it is shown that
$\frac{\partial L_1}{\partial g_2}(g_1,g_2) = \frac{\partial
  L_2}{\partial g_1}(g_1,g_2) =\frac{1}{\mathcal{B}_1(g_1,g_2)}\neq
0$.  Hence at $(g_1,g_2)=(\widehat g_1,\widehat g_2)$, the vanishing
Jacobian implies that
\[
\frac{\partial L_1}{\partial g_1}(\widehat g_1,\widehat g_2) 
\frac{\partial L_2}{\partial g_2}(\widehat g_1,\widehat g_2)  =
   \frac{1}{\mathcal{B}_1(g_1,g_2)^2}\neq0, \mbox{ thus both } 
\frac{\partial L_1}{\partial g_1}(\widehat g_1,\widehat g_2) \neq 0,\, 
\frac{\partial L_2}{\partial g_2}(\widehat g_1,\widehat g_2)  \neq 0.
\]
Looking at the equations $L_1(g_1,g_2)=\widehat L_1$ and
$L_2(g_1,g_2)=\widehat L_2$, the inequalities above and the implicit
function theorem give that there exist smooth curves $\widetilde
g_1(g_2)$, for $g_2$ nearby $\widehat g_2$, and $\widetilde g_2(g_1)$,
for $g_1$ nearby $\widehat g_1$, such that $L_1(\widetilde
g_1(g_2),g_2)=\widehat L_1$, $\widetilde g_1(\widehat g_2) = \widehat
g_1$ and $L_2(g_1,\widetilde g_2(g_1))=\widehat L_2$, $\widetilde
g_2(\widehat g_1) = \widehat g_2$. The associate wave fronts
\[
u^1_f(g_1): = \uf(g_1,\widetilde g_2(g_1)) \mbox{ and }
u^2_f(g_2): = \uf(\widetilde g_1(g_2),g_2) 
\]
solve the wave equation with one middle interval (with length
$L=L_1+L_2$ and middle potential $V_m$ being the combination of $V_{m_1}$
and $V_{m_2}$, each at their appropriate interval) and $u^1_f(\widehat
g_1) = \uf(\widehat g_1,\widehat g_2) = u^2_f(\widehat g_2)$. So the results of
the previous section can be used to determine whether or not the
eigenvalue zero of the linearisation about the $\uf(\widehat g_1,\widehat
g_2)$ signals a change in stability. 
\begin{cor}
  Let the front $\uf(x;g_1,g_2)$ be a solution of \eqref{WaveEqN} with
  $N=2$ (and  $(g_1,g_2)$ away from the existence bifurcation points),
  such that all zeroes of $\partial_x\uf(x;g_1,g_2)$ are simple and
  the lengths of the middle intervals of $\uf(x;g_1,g_2)$ form a
  smooth length surface $L(g_1,g_2)$.  There is a change in stability
  of $\uf(x;g_1,g_2)$ at $(g_1,g_2)=(\widehat{g}_1,\widehat g_2)$ when varying
  $g_1$ or $g_2$ if and only if
\begin{enumerate}
\item the determinant of the Jacobian
  $\det\left(\frac{\partial(L_1,L_2)}{\partial(g_1,g_2)}\right)=0$,
  implying that nearby $(\widehat g_1,\widehat g_2)$ there are curves $\widetilde
  g_1(g_2)$ and $\widetilde g_2(g_1)$ with $L_1(\widetilde g_1(g_2),g_2)=\widehat
  L_1$, $\widetilde g_1(\widehat g_2) = \widehat g_1$ and $L_2(g_1,\widetilde
  g_2(g_1))=\widehat L_2$, $\widetilde g_2(\widehat g_1) = \widehat g_2$;
\item the order of the zero $\widehat{g_1}$ of $\frac{dL_1}{d
    g_1}(g_1,\widetilde g_2(g_1)))$ is odd \emph{or} the order of the zero
  $\widehat{g_2}$ of $\frac{dL_2}{d g_2}(\widetilde g_1(g_2),g_2)$ is odd;
\item the eigenfunction  $\Psi(\widehat{g}_1,\widehat g_2)$ has no zeroes.
\end{enumerate}
\end{cor}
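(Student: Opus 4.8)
The plan is to reduce the two-interval problem to two applications of the single-interval criterion, Theorem~\ref{New1}, along the two curves $\widetilde g_2(g_1)$ and $\widetilde g_1(g_2)$ constructed just before the statement. As observed there, the associated fronts $u^1_f(g_1)=\uf(g_1,\widetilde g_2(g_1))$ and $u^2_f(g_2)=\uf(\widetilde g_1(g_2),g_2)$ each solve the wave equation with a \emph{single} middle interval of total length $L=L_1+L_2$ and combined middle potential, so Theorem~\ref{New1} applies verbatim to each one-parameter family. Condition (i) is needed only to guarantee the eigenvalue zero and the existence of these curves: by \cite[Theorem 6.1]{KDDS11} the linearisation about $\uf(\widehat g_1,\widehat g_2)$ has the eigenvalue zero if and only if the Jacobian determinant vanishes, and since $\Lambda$ must pass through zero for any change of stability, (i) is necessary. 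The vanishing Jacobian together with $\partial L_1/\partial g_2=\partial L_2/\partial g_1=1/\mathcal B_1\neq0$ forces $\partial L_1/\partial g_1\neq0$ and $\partial L_2/\partial g_2\neq0$, so the implicit function theorem yields the smooth curves $\widetilde g_2(g_1)$ and $\widetilde g_1(g_2)$.

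Next I would identify the reduced length functions. Along $\widetilde g_2(g_1)$ the second length is frozen, $L_2(g_1,\widetilde g_2(g_1))=\widehat L_2$, so the total length of $u^1_f$ is $L(g_1)=L_1(g_1,\widetilde g_2(g_1))+\widehat L_2$ and its derivative is the total derivative $\tfrac{dL_1}{dg_1}(g_1,\widetilde g_2(g_1))$. Differentiating the constraint gives $\widetilde g_2'(g_1)=-(\partial L_2/\partial g_1)/(\partial L_2/\partial g_2)$, whence a short computation shows
\[
\frac{dL_1}{dg_1}\bigl(g_1,\widetilde g_2(g_1)\bigr)
=\frac{\det\!\left(\frac{\partial(L_1,L_2)}{\partial(g_1,g_2)}\right)}{\partial L_2/\partial g_2}.
\]
Because $\partial L_2/\partial g_2\neq0$, this total derivative vanishes at $\widehat g_1$ precisely when the Jacobian vanishes; thus condition (i) is exactly condition (1) of Theorem~\ref{New1} for the reduced family $u^1_f$, and the symmetric computation does the same for $u^2_f$ and $\tfrac{dL_2}{dg_2}(\widetilde g_1(g_2),g_2)$.

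Finally I would assemble the equivalence. Applying Theorem~\ref{New1} to $u^1_f$ shows that stability changes along the $g_1$-curve if and only if the order of the zero of $\tfrac{dL_1}{dg_1}(g_1,\widetilde g_2(g_1))$ at $\widehat g_1$ is odd and the eigenfunction has no zeroes, with the analogous statement for $u^2_f$. Since the eigenfunction is that of the single full linearisation at $(\widehat g_1,\widehat g_2)$, it does not depend on the direction of approach, so condition (iii) is common to both reductions. A change of stability therefore occurs along the $g_1$-curve \emph{or} the $g_2$-curve exactly when at least one of the two orders is odd and the eigenfunction has no zeroes, which is precisely conditions (ii) and (iii); together with the necessity of (i) established above this gives the claimed equivalence.

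The main obstacle I anticipate is verifying that the reduced families genuinely meet the hypotheses of Theorem~\ref{New1}: that the combined length curve $L(g_1)$ is smooth (using smoothness of $\widetilde g_2$ from the implicit function theorem and of $L_1,L_2$), that the combined potential and front still satisfy the simple-zero requirement on $\partial_x\uf$, and—via Lemma~\ref{HysLem}—that the order of the zero of the total derivative read along the curve agrees with the order of the corresponding zero of $\Lambda$, so that the odd/even dichotomy transfers cleanly from the two-parameter setting to each one-parameter reduction.
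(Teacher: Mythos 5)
Your proposal takes essentially the same route as the paper: the paper likewise obtains the curves $\widetilde g_1(g_2)$ and $\widetilde g_2(g_1)$ from the vanishing Jacobian combined with $\frac{\partial L_1}{\partial g_2}=\frac{\partial L_2}{\partial g_1}=\frac{1}{\mathcal{B}_1}\neq0$ (forcing $\frac{\partial L_1}{\partial g_1}\neq0$ and $\frac{\partial L_2}{\partial g_2}\neq0$) via the implicit function theorem, and then applies the one-interval criterion of Theorem~\ref{New1} to the reduced families $u^1_f(g_1)$ and $u^2_f(g_2)$ with total length $L_1+L_2$ and combined middle potential. Your explicit identity $\frac{dL_1}{dg_1}\left(g_1,\widetilde g_2(g_1)\right)=\det\left(\frac{\partial(L_1,L_2)}{\partial(g_1,g_2)}\right)\Big/\frac{\partial L_2}{\partial g_2}$, showing that condition \emph{i)} is exactly the vanishing-derivative condition for the reduced length curve, is left implicit in the paper but is correct and consistent with its argument.
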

\noindent
This result has been used implicitly in~\cite{KDDS11} to prove that
the introduction of a defect in a $0$-$\pi$ Josephson junction can
lead to the stabilisation of a \emph{non-monotonic} stationary
front (the fact that the largest eigenvalue becomes negative was
verified numerically in~\cite{KDDS11}).

\smallskip This result can be generalised to an arbitrary number of
middle intervals. In the Lemma below, it will be shown that the
vanishing of the determinant of the Jacobian $\frac{\partial
  (L_1,\ldots,L_N)}{\partial (g_1,\ldots, g_N)}$ at $(\widehat
g_1,\ldots, \widehat g_N)$ implies that there exist two curves $\widetilde
g^1_j(g_1)$ and $\widetilde g^N_j(g_N)$, $j=1,\ldots,N$, nearby
$(\widehat g_1,\ldots, \widehat g_N)$, with $L_j(\widetilde
g^i_1(g_i),\ldots,\widetilde g_N^i(g_i))=L_j(\widehat g_1,\ldots,
\widehat g_N)$, $i=1,N$ and $j\in\{1, \ldots, N\}\backslash\{i\}$.
\begin{lemma}\label{lem.curves}
  Let $\det\left(\frac{\partial (L_1,\ldots,L_N)}{\partial
      (g_1,\ldots, g_N)}(\widehat
g_1,\ldots, \widehat g_N)\right)=0$ and define $\widehat
  L_j=L_j(\widehat g_1,\ldots, \widehat g_N)$. Then there exist curves
  $\widetilde g^1_j(g_1)$, $j=1,\ldots, N$, for $g_1$ nearby $\widehat
  g_1$, and $\widetilde g^N_j(g_N)$, $j=1,\ldots, N$, for $g_N$ nearby
  $\widehat g_N$, with $\widetilde g^i_j(\widehat
  g_i)=\widehat g_j$, $j=1,\ldots, N$ and $L_j(\widetilde g^i_1(g_i),
  \ldots, \widetilde g^i_N(g_i))=\widehat L_j$, $j\in\{1, \ldots,
  N\}\backslash\{i\}$, for $i=1,N$.
\end{lemma}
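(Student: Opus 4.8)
The plan is to read off the conclusion from the structure of the Jacobian matrix
\[
J \;:=\; \frac{\partial (L_1,\ldots,L_N)}{\partial (g_1,\ldots, g_N)}(\widehat g_1,\ldots,\widehat g_N).
\]
As recalled from~\cite{KDDS11}, each length $L_j$ depends only on the Hamiltonians $g_{j-1},g_j,g_{j+1}$ of the $j$-th middle interval and its two neighbours (through the matching conditions at the two interfaces bounding $I_{m_j}$), so $J$ is tridiagonal; moreover, generalising the identity $\frac{\partial L_1}{\partial g_2}=\frac{\partial L_2}{\partial g_1}=\frac1{\mathcal{B}_1}$ used above for $N=2$, one has $\frac{\partial L_j}{\partial g_{j+1}}=\frac{\partial L_{j+1}}{\partial g_j}=\frac1{\mathcal{B}_j}\neq0$. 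Thus $J$ is symmetric with non-vanishing off-diagonal entries, and the two desired families of curves will be produced by the implicit function theorem once the appropriate principal minors of $J$ are shown to be non-zero.

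First I would show that $\ker J$ is one-dimensional and that its generator has non-zero first and last components. Since $\det J=0$ we have $\operatorname{rank}J\le N-1$; on the other hand, deleting the first row and the last column of $J$ leaves a triangular matrix whose diagonal entries are the off-diagonals $\frac1{\mathcal{B}_1},\ldots,\frac1{\mathcal{B}_{N-1}}$, all non-zero, so $\operatorname{rank}J=N-1$ and $\ker J=\operatorname{span}(v)$ for a single vector $v=(v_1,\ldots,v_N)$. Writing $Jv=0$ row by row gives the three-term recursion $\mathcal{B}_{j-1}^{-1}v_{j-1}+\frac{\partial L_j}{\partial g_j}\,v_j+\mathcal{B}_j^{-1}v_{j+1}=0$ (with the terms of index $0$ and $N+1$ absent). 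If $v_1=0$, the first equation forces $v_2=0$, and the recursion then forces $v_3=\cdots=v_N=0$, contradicting $v\neq0$; hence $v_1\neq0$, and the same argument run from the opposite end gives $v_N\neq0$.

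Next I would identify the two relevant minors with entries of the adjugate. Because $\operatorname{rank}J=N-1$, the adjugate $\operatorname{adj}J$ has rank one, and as $J$ is symmetric it is a symmetric rank-one matrix whose column space is $\ker J=\operatorname{span}(v)$; hence $\operatorname{adj}J=\kappa\,vv^{\mathsf{T}}$ for some $\kappa\neq0$. The Jacobian $\frac{\partial(L_2,\ldots,L_N)}{\partial(g_2,\ldots,g_N)}$ is exactly the minor of $J$ obtained by deleting row and column $1$, that is the cofactor $(\operatorname{adj}J)_{11}=\kappa v_1^2$, which is non-zero since $v_1\neq0$; likewise $\frac{\partial(L_1,\ldots,L_{N-1})}{\partial(g_1,\ldots,g_{N-1})}=(\operatorname{adj}J)_{NN}=\kappa v_N^2\neq0$. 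Applying the implicit function theorem to the $N-1$ equations $L_j(g_1,\ldots,g_N)=\widehat L_j$, $j=2,\ldots,N$, with $g_1$ as the free parameter and $(g_2,\ldots,g_N)$ as unknowns (the relevant Jacobian being the invertible minor above), then yields smooth functions $\widetilde g^1_j(g_1)$, $j=2,\ldots,N$, near $\widehat g_1$ with $\widetilde g^1_j(\widehat g_1)=\widehat g_j$ and $L_j(g_1,\widetilde g^1_2(g_1),\ldots,\widetilde g^1_N(g_1))=\widehat L_j$ for $j\neq1$; setting $\widetilde g^1_1(g_1)=g_1$ gives the first family. The family indexed by $i=N$ is obtained identically, now using $(\operatorname{adj}J)_{NN}\neq0$ and solving $L_j=\widehat L_j$, $j=1,\ldots,N-1$, for $(g_1,\ldots,g_{N-1})$ as functions of $g_N$.

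The main obstacle is precisely the invertibility of these two end principal minors; everything hinges on the symmetry of $J$ together with the non-vanishing of the extreme components $v_1,v_N$ of the kernel vector, which is exactly what the tridiagonal structure with non-zero off-diagonals buys us. A secondary point to be careful about is the generalisation of the $N=2$ off-diagonal identity to arbitrary $N$ and the symmetry of $J$ (equivalently, that $(L_1,\ldots,L_N)$ is a gradient in $(g_1,\ldots,g_N)$), which should be read off from the explicit length formulas in~\cite{KDDS11}.
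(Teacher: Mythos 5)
Your proof is correct, and its overall architecture matches the paper's: both reduce the lemma to showing that the two end principal minors
$\det\left(\frac{\partial(L_2,\ldots,L_N)}{\partial(g_2,\ldots,g_N)}\right)$ and
$\det\left(\frac{\partial(L_1,\ldots,L_{N-1})}{\partial(g_1,\ldots,g_{N-1})}\right)$
are non-zero at $(\widehat g_1,\ldots,\widehat g_N)$, and then apply the implicit function theorem to $L_j=\widehat L_j$, $j\neq i$, exactly as you do; both also import the same facts from \cite{KDDS11} (tridiagonality, symmetry, and the non-vanishing off-diagonal entries $\mathcal{B}_i^{-1}$). Where you genuinely diverge is the mechanism for the minor non-vanishing. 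The paper argues by contradiction via the three-term cofactor recursion for tridiagonal determinants: expanding along the first row gives
$\det\left(\frac{\partial(L_1,\ldots,L_N)}{\partial(g_1,\ldots,g_N)}\right)
=\frac{\partial L_1}{\partial g_1}\det\left(\frac{\partial(L_2,\ldots,L_N)}{\partial(g_2,\ldots,g_N)}\right)
-\frac{1}{\mathcal{B}_1^2}\det\left(\frac{\partial(L_3,\ldots,L_N)}{\partial(g_3,\ldots,g_N)}\right)$,
so if the full Jacobian and the first trailing minor both vanished, all trailing minors would vanish in turn, contradicting the $2\times2$ case (where a vanishing determinant forces $\frac{\partial L_N}{\partial g_N}\neq0$, since the product of the diagonal entries must equal $\mathcal{B}_{N-1}^{-2}>0$); the case $i=N$ is then done ``in a similar way''. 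You instead give a direct structural argument: deleting the first row and last column exhibits a triangular submatrix with diagonal $\mathcal{B}_1^{-1},\ldots,\mathcal{B}_{N-1}^{-1}$, so $\operatorname{rank}J=N-1$; the three-term recursion forces the kernel generator to have $v_1\neq0$ and $v_N\neq0$; and the identity $\operatorname{adj}J=\kappa\,vv^{\mathsf{T}}$ ($\kappa\neq0$) identifies the two end principal minors as $\kappa v_1^2$ and $\kappa v_N^2$. Each step of yours checks out (in particular the deleted-row/column submatrix is indeed upper triangular by tridiagonality, and symmetry of a rank-one matrix does force the dyad form). What the comparison buys: the paper's recursion is more elementary and self-contained, while your kernel/adjugate route handles both ends $i=1,N$ in one stroke and yields slightly more information --- the two minors are non-zero \emph{simultaneously} and have the same sign, and the non-vanishing of $v_1,v_N$ makes transparent exactly where the tridiagonal structure with non-zero off-diagonals enters.
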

\begin{proof}
  First we will show with a contradiction argument that if
  $\det\left(\frac{\partial (L_1,\ldots,L_N)}{\partial (g_1,\ldots,
      g_N)}\right)=0$, then $\det\left(\frac{\partial
      (L_2,\ldots,L_N)}{\partial (g_2,\ldots, g_N)}\right)\neq0$.
  In~\cite{KDDS11}, it is shown that the Jacobian $\frac{\partial
    (L_1,\ldots,L_N)}{\partial (g_1,\ldots, g_N)}$ is tri-diagonal
  with on the diagonal the derivatives $\frac{\partial L_i}{\partial
    g_i}$, $i=1,\ldots, N$, and on the off-diagonal the non-zero
  derivatives $\mathcal{B}_i^{-1} = \frac{\partial L_{i+1}}{\partial
    g_{i}} = \frac{\partial L_i}{\partial g_{i+1}}$, $i=1,\ldots,
  N-1$. All other derivatives $\frac{\partial L_i}{\partial g_{j}}=0$,
  $|j-i|>1$. An expansion with respect to the first row gives
\begin{equation}\label{eq.det_exp}
  \det \left(\frac{\partial(L_1,\ldots,L_N)}{\partial (g_1,\ldots,
      g_N)}\right) = 
  \frac{\partial L_1}{\partial g_1}\, 
  \det \left(\frac{\partial(L_2,\ldots,L_N)}{\partial (g_2,\ldots,
      g_N)}\right) - \frac{1}{\mathcal{B}^2_1}\, 
  \det \left(\frac{\partial(L_3,\ldots,L_N)}{\partial (g_3,\ldots,
      g_N)}\right).
\end{equation}
Thus if both $\det\left(\frac{\partial (L_1,\ldots,L_N)}{\partial
    (g_1,\ldots, g_N)}\right)=0$ and $\det\left(\frac{\partial
    (L_2,\ldots,L_N)}{\partial (g_2,\ldots, g_N)}\right)=0$, then also
$\det\left(\frac{\partial (L_3,\ldots,L_N)}{\partial (g_3,\ldots,
    g_N)}\right)=0$. Using this in the equivalent expression
to~\eqref{eq.det_exp} for $\det\left(\frac{\partial
    (L_2,\ldots,L_N)}{\partial (g_2,\ldots, g_N)}\right)$, it follows
that also $\det\left(\frac{\partial (L_4,\ldots,L_N)}{\partial
    (g_4,\ldots, g_N)}\right)=0$. We can continue this argument and
conclude that if both $\det\left(\frac{\partial
    (L_1,\ldots,L_N)}{\partial (g_1,\ldots, g_N)}\right)=0$ and
$\det\left(\frac{\partial (L_2,\ldots,L_N)}{\partial (g_2,\ldots,
    g_N)}\right)=0$, then also $\det\left(\frac{\partial
    (L_j,\ldots,L_N)}{\partial (g_j,\ldots, g_N)}\right)=0$, for
$j=3,\ldots, N$. However, as we have seen in the case $N=2$,
$\det\left(\frac{\partial (L_{N-1},L_N)}{\partial (g_{N_1},
    g_N)}\right)=0$, implies that $\frac{\partial L_N}{\partial
  g_N}\neq0$, which contradicts the previous statement for $j=N$. So
we conclude that if $\det\left(\frac{\partial
    (L_1,\ldots,L_N)}{\partial (g_1,\ldots, g_N)}\right)=0$, then
$\det\left(\frac{\partial (L_2,\ldots,L_N)}{\partial (g_2,\ldots,
    g_N)}\right)\neq0$.

By definition, $L_j(\widehat g_1, \ldots, \widehat g_N)=\widehat L_j$, $j=2,
\ldots , N$. Since $\det\left(\frac{\partial
    (L_2,\ldots,L_N)}{\partial (g_2,\ldots, g_N)}\right)\neq0$, the
implicit function theorem implies that there are curves $\widetilde
g^1_j(g_1)$, $j=2,\ldots,N$ for $g_1$ nearby $\widehat g_1$, such that
$\widetilde g^1_j(\widehat g_1)=\widehat g_j$, and $L_j(g_1, \widetilde g^1_2(g_1),\ldots,
\widetilde g^1_N(g_1))=\widehat L_j$, $j=2,\ldots,N$. If we define $\widetilde
g_1^1(g_1)=g_1$, then we have derived the statement in the Lemma for
$i=1$. In a similar way, we can prove the case for $i=N$. 
\end{proof}

Lemma~\ref{lem.curves} implies that we are again in the situation of the previous
section and we can state the following corollary about a change in
stability. 
\begin{cor}\label{cor.N} 
Let the front $\uf(x;g_1,\ldots,g_N)$ be a
  solution of \eqref{WaveEqN} (with $(g_1,\ldots,g_N)$ away from the
  existence bifurcation points), such that all zeroes of
  $\partial_x\uf(x;g_1,\ldots,g_N)$ are simple and the lengths of the
  middle intervals of $\uf(x;g_1,\ldots,g_N)$ form a smooth length
  hyper-surface $L(g_1,\ldots,g_N)$.  There is a change in stability
  for the front
  $\uf(x;g_1,\ldots,g_N)$ at $(g_1,\ldots,g_N)=(\widehat{g}_1,\ldots,\widehat
  g_N)$ when varying $g_1$ or $g_N$ if and only if
\begin{enumerate}
\item the determinant of the Jacobian
  $\det\left(\frac{\partial(L_1,\ldots,L_N)}{\partial(g_1,\ldots,g_)}(\widehat
g_1,\ldots, \widehat g_N)\right)=0$;
\item the order of the zero $\widehat{g_1}$ of $\frac{dL_1}{d
    g_1}(g_1,\widetilde g^1_2(g_1),\ldots,\widetilde g^1_N(g_1))$ is odd \emph{or}
the order of the zero $\widehat g_N$ of $\frac{dL_N}{d
    g_N}(\widetilde g^N_1(g_N),\ldots,\widetilde g^{N}_{N-1}(g_N),g_N)$ is odd;
\item the eigenfunction  $\Psi(\widehat{g}_1,\ldots,\widehat g_N)$ has no zeroes.
\end{enumerate}
\end{cor}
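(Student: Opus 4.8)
The plan is to reduce the $N$-interval problem, along two one-parameter families supplied by Lemma~\ref{lem.curves}, to the single middle interval situation of Theorem~\ref{New1}, and then to take the disjunction of the two reductions. First I would unpack condition~\emph{i)}: by~\cite[Theorem 6.1]{KDDS11} the vanishing of $\det\left(\frac{\partial(L_1,\ldots,L_N)}{\partial(g_1,\ldots,g_N)}\right)$ at $(\widehat g_1,\ldots,\widehat g_N)$ is precisely the condition for the linearisation $\mathcal{L}$ about $\uf(\widehat g_1,\ldots,\widehat g_N)$ to have an eigenvalue zero, and Lemma~\ref{lem.curves} then yields the curves $\widetilde g^1_j(g_1)$ and $\widetilde g^N_j(g_N)$ along which every length except $L_1$, respectively $L_N$, is pinned at its value $\widehat L_j$.

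Next I would set up the reduction. Along the first curve, define $u^1_f(g_1):=\uf(\widetilde g^1_1(g_1),\ldots,\widetilde g^1_N(g_1))$ with $\widetilde g^1_1(g_1)=g_1$. Because $L_2,\ldots,L_N$ are held fixed, the intervals $I_{m_2},\ldots,I_{m_N},I_r$ retain their lengths and can be amalgamated with $I_{m_1}$ into a single middle interval of total length
\[
L^{(1)}(g_1):=L_1(\widetilde g^1_1(g_1),\ldots,\widetilde g^1_N(g_1))+\widehat L_2+\cdots+\widehat L_N,
\]
carrying the concatenated potential, with $I_l$ as the left outer interval; thus $u^1_f$ solves a wave equation with one middle interval and Hamiltonian parameter $g_1$, exactly as in the $N=2$ case above. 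Since the pinned lengths are constant, $\frac{d}{dg_1}L^{(1)}(g_1)=\frac{dL_1}{dg_1}(g_1,\widetilde g^1_2(g_1),\ldots,\widetilde g^1_N(g_1))$, so the zeros of the reduced slope and their orders are exactly those of the quantity in~\emph{ii)}. The symmetric construction along the second curve produces $u^N_f(g_N)$ with reduced length $L^{(N)}(g_N)$ and slope $\frac{dL_N}{dg_N}(\widetilde g^N_1(g_N),\ldots,\widetilde g^N_{N-1}(g_N),g_N)$.

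The key observation is that the linearisation about $u^1_f(\widehat g_1)$ coincides with $\mathcal{L}$ about $\uf(\widehat g_1,\ldots,\widehat g_N)$, as it is the same front on the same line; hence Theorem~\ref{OldTheo}, applied in its one-interval form to the reduction, gives that the eigenvalue zero is present if and only if $(L^{(1)})'(\widehat g_1)=0$. Together with~\cite[Theorem 6.1]{KDDS11} this shows condition~\emph{i)} is equivalent to $(L^{(1)})'(\widehat g_1)=0$ and, symmetrically, to $(L^{(N)})'(\widehat g_N)=0$. (Equivalently, a cofactor expansion of the tridiagonal Jacobian, together with the non-vanishing minor $\det\frac{\partial(L_2,\ldots,L_N)}{\partial(g_2,\ldots,g_N)}\neq0$ established in Lemma~\ref{lem.curves}, gives $\frac{dL_1}{dg_1}\big|_{\mathrm{curve}}=\det\left(\frac{\partial(L_1,\ldots,L_N)}{\partial(g_1,\ldots,g_N)}\right)\big/\det\frac{\partial(L_2,\ldots,L_N)}{\partial(g_2,\ldots,g_N)}$, so the reduced slope vanishes exactly when the Jacobian does.) Theorem~\ref{New1} applied to each reduced one-interval front then asserts that a change of stability occurs on varying $g_1$, respectively $g_N$, if and only if the reduced slope vanishes, its zero has odd order, and the shared eigenfunction $\Psi(\widehat g_1,\ldots,\widehat g_N)$ has no zeros. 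Taking the disjunction over the two families gives the stated equivalence: \emph{i)} encodes the simultaneous vanishing of both reduced slopes, \emph{iii)} is the common no-zero condition on the eigenfunction, and \emph{ii)} is exactly the requirement that at least one reduced slope vanish to odd order.

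The step I expect to demand the most care is verifying that the amalgamated one-interval problem truly meets the hypotheses of Theorems~\ref{OldTheo} and~\ref{New1} and of Lemma~\ref{HysLem}: that the concatenated middle potential yields a smooth length curve $L^{(1)}(g_1)$, that simplicity of the zeros of $\partial_x\uf$ is inherited, and that $(\widehat g_1,\ldots,\widehat g_N)$ being away from the existence bifurcation points translates into $g_1=\widehat g_1$ lying away from the one-interval bifurcation value, so that Lemma~\ref{HysLem} may be invoked to match the order of the zero of $(L^{(1)})'$ with the order of the zero of the eigenvalue $\Lambda$ (which is what makes odd order the correct parity criterion via Theorem~\ref{New1}). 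Once these inheritances are confirmed, the reduction is exact and the remainder is a direct appeal to the one-interval results together with the bookkeeping of the disjunction.
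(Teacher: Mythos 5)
Your proposal is correct and takes essentially the same approach as the paper: the paper likewise uses Lemma~\ref{lem.curves} to pin all lengths but one, amalgamates the middle intervals into a single middle interval (as spelled out explicitly for the $N=2$ case), and then appeals to Theorem~\ref{New1} for the reduced one-parameter family of fronts. Your explicit Schur-complement identity
$\frac{dL_1}{dg_1}\big|_{\mathrm{curve}}
=\det\left(\frac{\partial(L_1,\ldots,L_N)}{\partial(g_1,\ldots,g_N)}\right)
\big/
\det\left(\frac{\partial(L_2,\ldots,L_N)}{\partial(g_2,\ldots,g_N)}\right)$,
which ties the vanishing of the full Jacobian to the vanishing of the reduced slope, is a useful detail that the paper leaves implicit.
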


\section{Conclusion}
If stationary fronts of the inhomogeneous nonlinear wave
equation~\eqref{WaveEq} are considered away from any bifurcation
points, then Theorem~\ref{New1} gives a necessary and sufficient
condition for a branch of stationary fronts parametrised by the
Hamiltonian $g$ to change stability.  Furthermore,
Theorem~\ref{th.evans} identifies the stable branch and gives a
sufficient criterion for unstable branches in general.  Finally
Corollary~\ref{cor.N} gives such necessary and sufficient condition
for stability chance in a family of stationary fronts of the nonlinear
wave equation~\eqref{WaveEqN} with $N$ inhomogeneity intervals.

As illustrated in the example in section~\ref{sec.ex}, in many
specific applications these conditions can be used to determine the
stability of whole branches of solutions from the length curves.  For
instance if one knows that one branch of solutions is stable -- maybe
from an asymptotic analysis or otherwise -- then the conditions
imply a change of stability will occur when the length function has
a turning point.  If a branch was unstable, then looking at the
eigenfunction $\Psi(g)$ at turning points and checking if it has any
zeroes, will immediately show whether the adjacent branch is stable or
`more unstable' (has an extra positive eigenvalue).  
% We have also seen that a non-monotonic stationary front may be
% stable, something which can not occur for the homogeneous wave
% equation.

\bigskip\noindent
\textbf{Acknowledgement} 
%\begin{acknowledgement}
Christopher Knight was partially supported in this work by an EPSRC
Doctoral Training Grant: EP/P50404X/1. We are also very grateful to
the referee who speculated that there could be a link between the sign
of $L'(g)$ and instability as in the VK criterion.
%\end{acknowledgement}


\begin{thebibliography}{99}

\bibitem{AGJ90} 
J. Alexander, R. Gardner, and C.K.R.T. Jones,
\newblock 
A topological invariant arising in the stability analysis of
traveling waves. 
\newblock 
J. Reine Angew.\ Math., 410, pp.\ 167-212, 
\newblock
1990.

\bibitem{BS78}
A.R.~Bishop and T.~Schneider (Eds.),
\newblock
Solitons and Condensed Matter Physics: Proceedings of a Symposium Held
June 27-29, 1978, 
\newblock
Springer-Verlag,
\newblock
1978.

\bibitem{Coppel65}
W.A.~Coppel,
\newblock 
Stability and Asymptotic Behaviour of Differential Equations.
\newblock
D.C.\ Heath \& Co: Boston,  1965.

\bibitem{Coppel78}
W.A. Coppel. 
\newblock
Dichotomies in stability theory. Lecture Notes in Mathematics, volume 629.
\newblock
Springer, 1978.

\bibitem{D85}
A.S.~Davydov,
\newblock
Solitons in Molecular Systems,
\newblock
Dordrecht, the Netherlands: Reidel,
\newblock
1985.

\bibitem{DDKS09}
G.~Derks, A.~Doelman, C.J.K.~Knight, and H.~Susanto,
\newblock
Pinned fluxons in a Josephson junction with a finite-length inhomogeneity, 
\newblock
European J.~Appl.~Math., 23(2), pp.~201-244,
\newblock
2012

\bibitem{Eastham89}
M.S.P.~Eastham,
\newblock 
The Asymptotic Solution of Linear Differential Systems. 
Clarendon Press: Oxford,  1989.


\bibitem{Evans75}
 J.W. Evans,
\newblock
Nerve axon equations. IV. The stable and unstable impulse. 
\newblock 
Indiana Univ.\ Math.\ J., 24, pp.\ 1169-1190,
\newblock
1990.

\bibitem{GJM79}
J.D.~Gibbon, I.N.~James, and I.M.~Moroz,
\newblock
The Sine-Gordon Equation as a Model for a Rapidly Rotating Baroclinic Fluid,
\newblock
Phys. Script. 20, pp.~402-408,
\newblock
1979.

\bibitem{GSS87}
M. Grillakis, J. Shatah, and W. Strauss,
\newblock  
Stability theory of solitary waves in the presence of symmetry I. 
\newblock
J. Funct.\ Anal., 74, pp.\ 160–197,
\newblock
1987

\bibitem{GSS90}
M. Grillakis, J. Shatah, and W. Strauss,
\newblock
Stability theory of solitary waves in the presence of symmetry II. 
\newblock 
J. Funct.\ Anal., 94, pp.\ 308–348, 1990.

\bibitem{KDDS11}
C.J.K.~Knight, G.~Derks, A.~Doelman, and H.~Susanto
\newblock
Stability of Stationary Fronts in a Non-linear Wave Equation with Spatial Inhomogeneity,
\newblock
Journal of Differential Equations, 254(2), pp.~408-468,
\newblock
2013.

% \bibitem{K73}
% A.A. Kolokolov,  
% \newblock
% Lett. Nuovo Cimento Soc. Ital. Fis. 8, pp.\ 197,
% \newblock 
% 1973.


\bibitem{PB97}
S.~Pagano and A.~Barone,
\newblock
Josephson junctions.
\newblock
Supercond. Sci. Technol. 10 pp.~904-908,
\newblock
1997.


\bibitem{PW92}
R.L. Pego and M.I. Weinstein,
\newblock
Eigenvalues, and Instabilities of Solitary Waves
\newblock
Phil.\ Trans.\ R. Soc.\ Lond.\ A, 340, pp.\ 47-94, 
\newblock
1992.


\bibitem{S69}
A.C.~Scott,
\newblock
A nonlinear Klein-Gordon equation
\newblock
Am.\ J.\ Phys.\ 37, pp.~52-61,
\newblock
1969.

\bibitem{VK73}
M.G.~Vakhitov and  A.A. Kolokolov,
\newblock
Stationary solutions of the wave equation in the medium with nonlinearity
saturation. 
\newblock 
Izv. Vyssh. Uch. Zav. Radiofiz. 16, 1020, 1973; 
\newblock 
Radiophys. Quantum Electron 16, pp.\ 783–789, 1973.


\bibitem{W74}
G.B.~Whitham,
\newblock
Linear and Nonlinear Waves,
\newblock
Wiley-Interscience,
\newblock
1974.

\bibitem{Y98}
L.V.~Yakushevich,
\newblock
Nonlinear Physics of DNA,
\newblock
Wiley series in Nonlinear Science,
\newblock
1998.
\end{thebibliography}
\end{document}